\newtheorem{theorem}{Theorem}[section]
\newtheorem{lemma}[theorem]{Lemma}
\newtheorem{proposition}[theorem]{Proposition}
\newtheorem{remark}[theorem]{Remark}
\newcommand{\bm}{{\bf m}}
\newcommand{\Z}{\mathbb Z}
\newcommand{\R}{\mathbb R}
\newcommand{\C}{\mathbb C}
\newcommand{\T}{\mathbb T}
\newcommand{\N}{\mathbb N}
\definecolor{deepgreen}{cmyk}{1,0,1,0.5}
\title[Magnetic Laplacian on the Lieb lattice]{On the spectrum of magnetic Laplacian on the Lieb lattice}
\author[M.\ Solis]{Moises Gomez Solis}
\address{Department of Mathematics \\ Louisiana State University  \\  Baton Rouge, LA 70803, USA}
\email{mgome29@lsu.edu}
\author[D.\ Spedale]{Dylan Spedale}
\address{Department of Mathematics \\ Louisiana State University  \\  Baton Rouge, LA 70803, USA}
\email{dspeda1@lsu.edu}
\author[F.\ Yang]{Fan Yang}
\address{Department of Mathematics \\ Louisiana State University  \\  Baton Rouge, LA 70803, USA}
\email{yangf@lsu.edu}
\thanks{M.\ Solis was partially supported by NSF DMS-2143369.}
\begin{document}
\begin{abstract}
We study the magnetic Laplacian on the Lieb lattice, and prove Cantor spectrum for arbitrary irrational magnetic flux.
We also provide a complete spectral analysis for the reduced one-dimensional Hamiltonian, proving Cantor spectra for all irrational frequencies, and sharp arithmetic phase transitions. Part of our analysis reveals a novel coexistence phenomenon of point spectrum and absolutely/singular continuous spectrum.
\end{abstract}

\thanks{}

\maketitle

\section{Introduction}

The Lieb lattice is lattice named in honor of Elliot H. Lieb, first appearing in his study of the Hubbard model \cite{Li}. 
This lattice is characterized by a unit cell containing three atomic sites, as shown in Fig.\ref{LiebL}.
In particular, since the $A$-type sites are only connected to the $B,C$-types and vice-versa, it is a bipartite lattice.
This lattice can also be viewed as a variant of the square lattice, by adding the edge-centered $B,C$ vertices in the square lattice consisting solely of $A$ vertices. The $\mathrm{CuO_2}$ plane \cite{Em,SSSW} in copper-oxide superconductors is a well known material that has a natural Lieb lattice structure.

Due to the specific geometry, the study of the Lieb lattice continues to be active with ongoing investigations into its unique and interesting properties, such as flat band \cite{BP, NOA, Mu, Sl}, Dirac cone \cite{NOA, LHID, Sl}, topological insulator behavior \cite{DMM,MMD, WF}, Hofstadter type spectrum \cite{AAM}, quantum spin Hall effect \cite{GUB, BES}, localization \cite{LMZR,Vi}, ferromagnetism \cite{MT,Ta}, and high temperature superconductivity \cite{MMD,JPVKT}. There are also  experimental realizations of the Lieb lattice in various systems such as photonic lattices \cite{Mu, Gu, Vi}, cold atom systems \cite{Taie}, and electronic systems \cite{Sl}, etc. Such research contributes significantly to understanding the quantum mechanical nature of materials and potential applications in technological innovations.

The band structure of the Lieb lattice is of particular interest. While other lattices such as the checkerboard lattice and the Kagome lattice also display flat band, the Lieb lattice is special since its flat band is robust against rational magnetic fields \cite{AAM, NOA}.
More generally, the $2^n$ root model \cite{MMD}, extended Lieb lattice \cite{BP, MZR}, and Lieb lattice with more general hopping integrals or next nearest neighborhood interaction \cite{PM, PSBKM,JPVKT, WF} have been studied extensively.
In this paper, we prove Cantor spectrum for the Lieb lattice model in an arbitrary irrational magnetic field, thus theoretically confirming the fractal structure of the Hofstadter butterfly that was numerically plotted in \cite{AAM}. This result lays the theoretical foundation for the study of quantum Hall effect of the Lieb lattice.
We also provide a complete spectral analysis for the reduced one-dimensional Hamiltonian, proving Cantor spectra for all irrational frequencies, and sharp arithmetic phase transitions. 

We now proceed to describe the tight-binding Hamiltonian that we consider in this paper. 

\subsection{The model}\label{sec:model}
\

\begin{figure} 
\begin{tikzpicture}

    \tikzset{Aset/.style={black,circle,draw,fill=gray,scale=1.6,inner sep=2pt}};
    \tikzset{Bset/.style={black,circle,draw,fill=blue,scale=1.4,inner sep=2pt}};
    \tikzset{Cset/.style={black,circle,draw,fill=red,scale=1.4,inner sep=2pt}};

    \coordinate (origin_of_array) at (0,0);
    \coordinate (bottom_left) at (0,0);
    \coordinate (top_right) at (6,6);

    \node[Aset, label=left:$A$] (a1) at (1,1) {};
    \node[Aset, label=left:$A$] (a2) at (1,3) {};
    \node[Aset, label=left:$A$] (a3) at (1,5) {};
    \node[Aset, label=below:$A$] (a4) at (3,1) {};
    \node[Aset, ] (a5) at (3,3) {};
    \node[Aset, ] (a6) at (3,5) {};
    \node[Aset, label=below:$A$] (a7) at (5,1) {};
    \node[Aset, ] (a8) at (5,3) {};
    \node[Aset, ] (a9) at (5,5) {};
    \node[Aset, label=below:$A$] (a10) at (5,1) {};
    \node[Cset, label=below:$C$] (c1) at (2,1) {};
    \node[Cset, ] (c2) at (2,3) {};
    \node[Cset, ] (c3) at (2,5) {};
    \node[Cset, label=below:$C$] (c5) at (4,1) {};
    \node[Cset, ] (c6) at (4,3) {};
    \node[Cset, ] (c7) at (4,5) {};
    \node[Bset, label=left:$B$] (b1) at (1,2) {};
    \node[Bset, label=left:$B$] (b2) at (1,4) {};
    \node[Bset, ] (b4) at (3,2) {};
    \node[Bset, ] (b5) at (3,4) {};
    \node[Bset, ] (b7) at (5,2) {};
    \node[Bset, ] (b8) at (5,4) {};
    \node[label=left:$na$] at (0,1) {};
    \node[label=left:$(n+1)a$] at (0,3) {};
    \node[label=left:$(n+2)a$] at (0,5) {};
    \node[label=below:$ma$] at (1,-0.13) {};
    \node[label=below:$(m+1)a$] at (3,0) {};
    \node[label=below:$(m+2)a$] at (5,0) {};
    \draw[thick] (a1)--(a3);
    \draw[thick] (a3)--(a9);
    \draw[thick] (a9)--(a7);
    \draw[thick] (a7)--(a1);
    \draw[thick] (a2)--(a8);
    \draw[thick] (a4)--(a6);
    \draw[thick, ->] (0,0)--(0,6){};
    \draw[thick, ->] (0,0)--(6,0){};
 
    \path[<->] (a1) edge[bend right = 60] node[xshift = 4.5pt]{${\scriptscriptstyle t_y}$} (b1); 
    \path[<->] (a1) edge[bend right = 60] node[yshift = -3pt] {${\scriptscriptstyle t_x}$} (c1);
    
\end{tikzpicture}
\caption{Lieb Lattice}
\label{LiebL}
\end{figure}
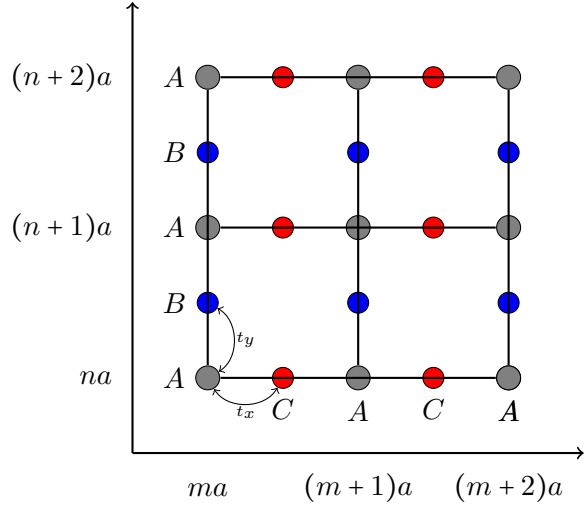
We denote by $t_x$ and $t_y$ the \textit{hopping integrals} along the $x$, and $y$ directions, respectively, see Figure \ref{LiebL}. 
We assume without loss of generality that the hopping integrals satisfy $t_x=t>0$ and $t_y=1$. Then the tight-binding Hamiltonian of the Lieb lattice immersed in a perpendicularly oriented magnetic field $\bm{A} = (-2\pi\alpha\cdot y, 0,0)$ (where $2\pi \alpha$ is the magnetic flux through each unit cell), under the assumption that interaction occurs only between nearest-neighbors, is given by:
\begin{align}\label{LiebHam}
    (H_{\alpha,t}u)^A_{m,n} &=e^{i\pi m \alpha}u^B_{{m,n}} + e^{-i\pi m \alpha} u^B_{{m,n-1}} + t u^C_{{m,n}}+t u^C_{{m-1,n}}\\
    (H_{\alpha,t}u)^B_{{m,n}}&=e^{i\pi m \alpha}u^A_{{m,n+1}}+ e^{-i\pi m \alpha}u^A_{{m,n}}\\
    (H_{\alpha,t}u)^C_{{m,n}}&=t u^A_{{m,n}}+t u^A_{{m+1,n}}
\end{align}
Taking the Fourier transform in the  variable $n$, the two-dimensional Hamiltonian $H_{\alpha,t}$ is reduced to a family of one-dimensional operators $H_{\alpha,t,\theta}$, $\theta\in \T$, acting on 
$$\ell^2(\Z, \C^3)=\{u=(...,u_{m+1}^C,u_m^A,u_m^B,u_m^C,u_{m-1}^A,...)^T, \sum_{m\in \Z}\sum_{D=A,B,C}|u_m^D|^2<\infty\}$$ as follows,
\begin{align}\label{def:H_alpha_theta}
    (H_{\alpha,t,\theta}u)^A_{m}&=e^{\pi i m\alpha}(1+e^{-2\pi i(\theta+m\alpha)})u^B_{m}+t(u^C_m+u^C_{m-1}),\\
    (H_{\alpha,t,\theta}u)^B_m&=e^{-\pi i m\alpha}(1+e^{2\pi i(\theta+m\alpha)})u^A_m\\
    (H_{\alpha,t,\theta}u)^C_m&=t(u^A_m+u^A_{m+1}).
\end{align}
It is well-known that 
\begin{align}\label{eq:sigH^2=sigH^1}
    \sigma(H_{\alpha,t})=\bigcup_{\theta} \sigma(H_{\alpha,t,\theta}).
\end{align}
The operator $H_{\alpha,t,\theta}$ can be written as a $3\times 3$ block operator form:
\begin{align}\label{def:H_att}
    (H_{\alpha,t,\theta}U)_m=BU_{m+1}+V_m(\theta+m\alpha)U_m+B^TU_{m-1},
\end{align}
in which $U_m=(u_m^A, u_m^B, u_m^C)^T$, and
\begin{align}
    B:=\left(\begin{matrix} 0 & 0 &0\\ 0 & 0 &0 \\ t&0 &0\end{matrix}\right), \text{ and } V_m(\theta)=\left(\begin{matrix} 0 &\mathcal{K}_m(\theta) & t\\ \overline{\mathcal{K}_m(\theta)} & 0 &0\\
    t & 0 &0\end{matrix}\right),
\end{align}
where we have defined 
\begin{align}\label{def:Km}
\mathcal{K}_m(\theta):=e^{\pi i m\alpha}(1 + e^{-2\pi i \theta}).
\end{align}
Recently, block-valued operators have received increasing attention. These operators arise naturally from lattices consisting of more than one vertex in each fundamental domain (e.g. multi-layer graphene models).
In general, the study of block-valued operators is much more complicated than the scalar ones since the transfer matrices associated to block-valued operators are symplectic matrices of larger sizes than $\mathrm{SL}(2,\R)$ matrices in the scalar case.
For example, non-perturbative localization in the positive Lyapunov exponent regime was proved for the scalar case more than 20 years ago by Jitomirskaya for the almost Mathieu operator \cite{Ji} and by Bourgain-Goldstein for general analytic potentials \cite{BG}. However, the block-valued case was only established recently by Han and Schlag in \cite{HS1,HS2} in the non-singular setting (when $\det B\neq 0$), resolving a long standing problem since Bourgain and Jitomirskaya in \cite{BJ} and Klein in \cite{Kl}.
Han and Schlag's result does not apply to our model, since $H_{\alpha,t,\theta}$ is singular (meaning $\det B= 0$).
However, singularity works in favor of our spectral analysis. In fact, we are able to utilize this singularity to reduce the $3\times 3$ block-valued $H_{\alpha,t,\theta}$ into an extensively studied scalar model: the almost Mathieu operator (AMO).
This reduction enables us to perform a complete spectral analysis of $H_{\alpha,t,\theta}$.

\subsection{Main results}
\

In the literature, the band structure of the tight binding model on the Lieb lattice has been studied extensively, showing that $E=0$ is one degenerate flat band between dispersive bands for any rational magnetic flux $\alpha$.

Our first result shows that $E=0$ is always in the spectrum for any irrational $\alpha$.
\begin{theorem}\label{thm:zero}
    For any irrational $\alpha$ and any $\theta\in \T$, $0\in \sigma(H_{\alpha,t,\theta})$.
\end{theorem}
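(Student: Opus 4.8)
The plan is to establish $0\in\sigma(H_{\alpha,t,\theta})$ through Weyl's criterion, by exhibiting explicit approximate null states localized on the $B$-sublattice. Since the hopping integral $t$ and the coefficients $\mathcal{K}_m$ are bounded, $H_{\alpha,t,\theta}$ is a bounded operator, and it is self-adjoint because each $V_m(\theta+m\alpha)$ is Hermitian and the off-diagonal blocks are $B$ and $B^T=B^\ast$ (recall $t$ is real). Hence it suffices to produce unit vectors $\psi^{(k)}\in\ell^2(\Z,\C^3)$ with $\|H_{\alpha,t,\theta}\psi^{(k)}\|\to 0$.

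The decisive structural observation is that a state supported on a single $B$-site is mapped by $H_{\alpha,t,\theta}$ into a single $A$-component. Indeed, fix $m_k\in\Z$ and let $\psi^{(k)}$ be the vector whose only nonzero entry is $u^B_{m_k}=1$, i.e.\ $U_{m_k}=(0,1,0)^T$ and $U_m=0$ for $m\neq m_k$. The $B$-component does not appear in the equations for $(H_{\alpha,t,\theta}u)^B$ or $(H_{\alpha,t,\theta}u)^C$, and the off-diagonal blocks $B,B^T$ act only on the $A$- and $C$-components (so they annihilate a pure $B$-vector). Consequently $H_{\alpha,t,\theta}\psi^{(k)}$ has a single nonzero entry, its $A$-component at site $m_k$, which equals $\mathcal{K}_{m_k}(\theta+m_k\alpha)$. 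Therefore
\begin{align}
    \|H_{\alpha,t,\theta}\psi^{(k)}\| = \bigl|\mathcal{K}_{m_k}(\theta+m_k\alpha)\bigr| = 2\,\bigl|\cos\pi(\theta+m_k\alpha)\bigr|.
\end{align}

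It then remains to drive this quantity to zero. Since $\alpha$ is irrational, the orbit $\{\,\theta+m\alpha \bmod 1 : m\in\Z\,\}$ is dense (indeed equidistributed) in $\T$, so we may select a sequence $m_k$ with $\theta+m_k\alpha\to \tfrac12\pmod 1$, whence $\cos\pi(\theta+m_k\alpha)\to 0$ and $\|H_{\alpha,t,\theta}\psi^{(k)}\|\to 0$. By Weyl's criterion this yields $0\in\sigma(H_{\alpha,t,\theta})$ for every $\theta\in\T$.

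I do not anticipate a genuine obstacle in this argument; the only conceptual subtlety is that for irrational $\alpha$ the coefficient $\mathcal{K}_m(\theta+m\alpha)$ never vanishes along the orbit (its modulus $2|\cos\pi(\theta+m\alpha)|$ is positive for all $m$ since $\theta+m\alpha\neq\tfrac12\pmod1$), so there is in general no $\ell^2$ kernel element—consistent with the Cantor/continuous-spectrum picture developed later—and one must work with approximate eigenvectors rather than an exact zero mode. Uniformity in $\theta$ is automatic, as the density of the irrational-rotation orbit holds for every base point $\theta$.
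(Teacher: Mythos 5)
Your proposal is correct and is essentially identical to the paper's own proof: both use Weyl's criterion with a unit trial vector supported on a single $B$-site at $m_k$, observe that $H_{\alpha,t,\theta}$ maps it to a single $A$-entry equal to $\mathcal{K}_{m_k}(\theta+m_k\alpha)$, and use density of the irrational rotation orbit to choose $m_k$ making $|1+e^{2\pi i(\theta+m_k\alpha)}|=2|\cos\pi(\theta+m_k\alpha)|$ arbitrarily small. Your remark that no exact zero mode exists for the one-dimensional operator is a correct additional observation, not needed for the statement.
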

 One may be able to prove Theorem \ref{thm:zero} through rational approximations following the argument of \cite[Theorem 3.6]{AS}. We instead give a direct proof by the Weyl's criterion and explicit constructions in Sec.~\ref{sec:zero_proof}.

For any non-integer $\alpha$, the zero energy is gapped away from the {\it bulk of the spectrum}, which we define as:
\begin{align}\label{def:bulk}
\Sigma_{\alpha,t}:=(\bigcup_{\theta\in \T}\sigma(H_{\alpha,t,\theta}))\setminus \{0\}=\sigma(H_{\alpha,t})\setminus \{0\}.
\end{align}
\begin{theorem}\label{thm:away_from_zero}
For $t\neq 0$ and $\alpha\notin \Z$, the bulk of the spectrum $\Sigma_{\alpha,t}$ satisfies $\mathrm{dist}(\Sigma_{\alpha,t}, \{0\})>0$.
Furthermore, for any irrational $\alpha$,
\begin{itemize}
    \item $\Sigma_{\alpha,t}$ is a Cantor set.
    \item for $t=1$, the Hausdorff dimension of $\Sigma_{\alpha,t}$ (and also $\sigma(H_{\alpha,t,\theta})=\Sigma_{\alpha,t}\cup \{0\}$) is at most $1/2$. 
\end{itemize}
\end{theorem}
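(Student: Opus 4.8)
The plan is to exploit the singularity $\det B=0$ to remove the $B$- and $C$-components and reduce $H_{\alpha,t,\theta}$, on the nonzero part of its spectrum, to an almost Mathieu operator (AMO). Fix $E\neq 0$ and let $P_A,P_{BC}$ be the orthogonal projections of $\ell^2(\Z,\C^3)$ onto the $A$-sites and the $(B,C)$-sites. From \eqref{def:H_att} one sees that $H_{\alpha,t,\theta}$ carries no matrix elements between $(B,C)$-sites, so $P_{BC}H_{\alpha,t,\theta}P_{BC}=0$ and $P_{BC}(H_{\alpha,t,\theta}-E)P_{BC}=-E\,\mathrm{Id}$ is invertible; hence $H_{\alpha,t,\theta}-E$ is invertible iff its Schur complement $S(E)=-E\,\mathrm{Id}_A+E^{-1}P_AH_{\alpha,t,\theta}P_{BC}H_{\alpha,t,\theta}P_A$ is. A direct computation of the $A\to(B,C)\to A$ transitions gives $P_AH_{\alpha,t,\theta}P_{BC}H_{\alpha,t,\theta}P_A=t^2\mathcal{A}_\lambda+(2+2t^2)\,\mathrm{Id}$, where $\lambda=1/t^2$, $\mathcal{A}_\lambda$ is the AMO $(\mathcal{A}_\lambda v)_m=v_{m+1}+v_{m-1}+2\lambda\cos(2\pi(\theta+m\alpha))v_m$ (the diagonal being $|\mathcal{K}_m(\theta+m\alpha)|^2+2t^2=2+2\cos(2\pi(\theta+m\alpha))+2t^2$ and the nearest-neighbor entries $t^2$). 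Thus $S(E)=E^{-1}(t^2\mathcal{A}_\lambda+2+2t^2-E^2)$ and, writing $g(E):=(E^2-2-2t^2)/t^2$, we obtain $E\in\sigma(H_{\alpha,t,\theta})$ iff $g(E)\in\sigma(\mathcal{A}_{\lambda,\alpha,\theta})$. Taking the union over $\theta$ and using \eqref{eq:sigH^2=sigH^1} gives $\Sigma_{\alpha,t}=g^{-1}\bigl(\bigcup_\theta\sigma(\mathcal{A}_{\lambda,\alpha,\theta})\bigr)\cap\{E\neq0\}$; for irrational $\alpha$ the union is the single $\theta$-independent spectrum $\sigma(\mathcal{A}_\lambda)$.

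Since $g$ is even with $g(0)=-2-2\lambda$ and $g(E)=-2-2\lambda$ only at $E=0$, the claim $\mathrm{dist}(\Sigma_{\alpha,t},\{0\})>0$ is equivalent to showing that the bottom of the AMO spectrum lies strictly above the symbol minimum, i.e.\ $\inf\bigcup_\theta\sigma(\mathcal{A}_{\lambda,\alpha,\theta})>-2-2\lambda$ for every $\lambda>0$ and $\alpha\notin\Z$. I would prove this by writing, for each $\theta$, $\mathcal{A}_{\lambda,\alpha,\theta}+2+2\lambda=(A+2)+2\lambda\bigl(1+\cos(2\pi(\theta+m\alpha))\bigr)$ as a sum of two nonnegative operators, where $A$ is the adjacency operator and $\langle\psi,(A+2)\psi\rangle=\sum_m|\psi_m+\psi_{m+1}|^2$. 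If the infimum were $-2-2\lambda$ there would be a norm-one Weyl sequence $\psi^{(k)}$ on which both quadratic forms tend to $0$. Putting $\phi^{(k)}_m=(-1)^m\psi^{(k)}_m$, the first form becomes $\sum_m|\phi_{m+1}-\phi_m|^2\to0$, so $\phi^{(k)}$ is asymptotically constant across neighbors, while the second forces the mass of $\phi^{(k)}$ onto $G_\delta=\{m:1+\cos(2\pi(\theta+m\alpha))<\delta\}$, a $\delta'$-neighborhood (with $\delta'\to0$ as $\delta\to0$) of the points where $\theta+m\alpha\equiv\tfrac12$. Because $\alpha\notin\Z$ gives $\|\alpha\|_{\R/\Z}>0$, for $\delta'<\tfrac12\|\alpha\|_{\R/\Z}$ (a threshold independent of $\theta$) the set $G_\delta$ contains no two consecutive integers, so $G_\delta+1\subseteq G_\delta^c$; combining $|\phi_m|\le|\phi_{m+1}|+|\phi_{m+1}-\phi_m|$ with the vanishing of the mass off $G_\delta$ then forces $\sum_{m\in G_\delta}|\phi^{(k)}_m|^2\to0$, contradicting $\|\phi^{(k)}\|=1$. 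The threshold being uniform in $\theta$ (and the bottom eigenvalue being attained by compactness in the periodic case), the same bound holds for the union over $\theta$, giving the gap. I expect this non-concentration estimate to be the main obstacle, since it is the only step where the hypothesis $\alpha\notin\Z$ is genuinely used.

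Finally, for irrational $\alpha$ the gap yields $\Sigma_{\alpha,t}\subseteq\{|E|\ge c\}$ for some $c>0$, so on each half-line $\pm[c,\infty)$ the quadratic map $g$ is a real-analytic diffeomorphism with $g'(E)=2E/t^2$ bounded away from $0$ and $\infty$; hence $g$ restricts to a bi-Lipschitz homeomorphism of $\Sigma_{\alpha,t}\cap(0,\infty)$ onto $\sigma(\mathcal{A}_\lambda)$, and likewise on the negative side (the set being symmetric since $g$ is even). As $\lambda=1/t^2\neq0$, the resolution of the Ten Martini Problem (Cantor spectrum of $\mathcal{A}_\lambda$ for all irrational $\alpha$) transfers through this homeomorphism: $\sigma(\mathcal{A}_\lambda)$ is a Cantor set, so each of its two bi-Lipschitz copies is, and their disjoint union $\Sigma_{\alpha,t}$ is a Cantor set. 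For $t=1$ one has $\lambda=1$, the critical coupling, and the known bound $\dim_H\sigma(\mathcal{A}_1)\le\tfrac12$ (valid for all irrational $\alpha$) transfers verbatim under a bi-Lipschitz map, which preserves Hausdorff dimension; thus $\dim_H\Sigma_{\alpha,t}\le\tfrac12$, and adjoining the single point $0$ leaves the dimension unchanged, so $\dim_H\sigma(H_{\alpha,t,\theta})=\dim_H(\Sigma_{\alpha,t}\cup\{0\})\le\tfrac12$.
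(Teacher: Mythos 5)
Your proposal is correct, and it lands on the same architecture as the paper --- the spectral mapping $\sigma(H_{\alpha,t,\theta})\setminus\{0\}=g^{-1}\bigl(\sigma(H^{\mathrm{AMO}}_{\alpha,\theta,t^{-2}})\bigr)$ with $g(E)=(E^2-2-2t^2)/t^2$, followed by the Ten Martini theorem and the Jitomirskaya--Krasovsky bound transported by a bi-Lipschitz map --- but it differs genuinely at the two technical steps. For the reduction, the paper squares the Hamiltonian, computes $\tilde H_{\alpha,t,\theta}\tilde H^*_{\alpha,t,\theta}=t^2H^{\mathrm{AMO}}_{\alpha,\theta,t^{-2}}+2+2t^2$ on the $A$-sites, and then must do bookkeeping your route skips entirely: verifying $\mathrm{ker}(\tilde H^*_{\alpha,t,\theta})=\{0\}$, invoking the unitary equivalence of $\tilde H\tilde H^*$ and $\tilde H^*\tilde H$ off kernels so the $(B,C)$ block contributes no extra nonzero spectrum, and proving $\sigma(H_{\alpha,t,\theta})=-\sigma(H_{\alpha,t,\theta})$ by a separate gauge transformation in order to undo the squaring. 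Your Schur/Feshbach complement $S(E)=-E+E^{-1}P_AHP_{BC}HP_A$ is legitimate here precisely because $P_{BC}HP_{BC}=0$ makes the $(B,C)$ block of $H-E$ equal to $-E\,\mathrm{Id}$, and it yields the two-sided equivalence $E\in\sigma(H_{\alpha,t,\theta})\Leftrightarrow g(E)\in\sigma(H^{\mathrm{AMO}}_{\alpha,\theta,t^{-2}})$ for $E\neq0$ in one stroke, with the spectral symmetry free since $g$ is even; it is closest in spirit to the paper's alternate Weyl-sequence reduction in Sec.~\ref{sec:Weyl}, but cleaner than either of the paper's routes. For the gap at zero, the paper's Theorem \ref{thm:AMO_gap_boundary} is a two-line norm estimate, $\|(H^{\mathrm{AMO}}_{\alpha,\theta,t^{-2}})^2\|<(2+2t^{-2})^2$, resting on the identity $\cos(2\pi x)+\cos(2\pi(x+\alpha))=2\cos(\pi\alpha)\cos(2\pi x+\pi\alpha)$, which is bounded by $2|\cos\pi\alpha|<2$ when $\alpha\notin\Z$; your decomposition of $\mathcal{A}_\lambda+2+2\lambda$ into the nonnegative forms $(A+2)$ and $2\lambda(1+\cos)$, plus the non-concentration argument (no two consecutive sites in $G_\delta$), exploits exactly the same mechanism --- consecutive sites cannot simultaneously extremize the potential --- but variationally and at greater length, with the compensating advantage that it isolates the bottom edge of the AMO spectrum, which is the only edge relevant to $\mathrm{dist}(\Sigma_{\alpha,t},\{0\})$, whereas the paper's estimate controls both edges at once. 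One presentational suggestion: run your Weyl-sequence contradiction quantitatively (all thresholds depend only on $\|\alpha\|_{\T}$ and $\lambda$, not on $\theta$), so that the uniformity in $\theta$ needed for the union over $\theta$ when $\alpha$ is rational noninteger is explicit; your parenthetical appeal to compactness in the periodic case is repairable but the quantitative form is cleaner.
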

We present the proof of Theorem \ref{thm:away_from_zero} in Sec. \ref{sec:reduction}.
As we mentioned earlier, the proof relies on a reduction from $H_{\alpha,t,\theta}$ to AMO, similar to that in \cite{BHJ}, which reduces a graphene model into a family of singular Jacobi matrix.
Such reduction uses the ``squaring the Hamiltonian'' technique, which relies crucially on the bipartite nature of the graph.

We also present an alternate reduction via the Weyl criterion, which essentially uses the singularity of the operator ($\det B=0$), in Sec. \ref{sec:Weyl}. This alternate approach is more robust and can be applied to various other lattices beyond bipartite ones, which we will address in follow up works.

Taking \eqref{eq:sigH^2=sigH^1} into account, the two theorems above imply the following concerning the two-dimensional Hamiltonian $H_{\alpha,t}$:
\begin{theorem}\label{thm:Cantor_2d}
For $t\neq 0$ and $\alpha\notin \Z$, $E=0$ is a point spectrum of $H_{\alpha,t}$, and $\mathrm{dist}(\Sigma_{\alpha,t},\{0\})>0$. $H_{\alpha,t}$ has purely continuous spectrum in $\Sigma_{\alpha,t}$.
Furthermore, if $\alpha$ is irrational,
    \begin{itemize}
        \item $\Sigma_{\alpha,t}$ is a Cantor set.
        \item for $t=1$, $H_{\alpha,t}$ has purely singular continuous spectrum in $\Sigma_{\alpha,t}$, and the Hausdorff dimension of $\Sigma_{\alpha,t}$ (and also $\sigma(H_{\alpha,t})=\Sigma_{\alpha,t}\cup \{0\}$) is at most $1/2$.
    \end{itemize}  
\end{theorem}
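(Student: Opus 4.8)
The plan is to deduce everything from the fibered representation $H_{\alpha,t}\cong\int_{\T}^{\oplus}H_{\alpha,t,\theta}\,d\theta$ underlying \eqref{eq:sigH^2=sigH^1}, so that each spectral assertion about the two-dimensional operator becomes a statement about the fibers $H_{\alpha,t,\theta}$ together with their $\theta$-dependence. The set-theoretic claims require no new work: since $\Sigma_{\alpha,t}=\sigma(H_{\alpha,t})\setminus\{0\}$ by \eqref{def:bulk} and $\sigma(H_{\alpha,t})=\bigcup_\theta\sigma(H_{\alpha,t,\theta})$, the gap $\mathrm{dist}(\Sigma_{\alpha,t},\{0\})>0$, the Cantor property, and the bound $\dim_H\Sigma_{\alpha,t}\le 1/2$ follow immediately from Theorem \ref{thm:away_from_zero}; adjoining the single point $0$ leaves the Hausdorff dimension unchanged, giving the statement for $\sigma(H_{\alpha,t})$. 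What genuinely remains is to identify the spectral \emph{type} on each piece.

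For the point spectrum at $E=0$ I would produce an honest $\ell^2$ eigenfunction by gluing fiber zero-modes. For a.e.\ $\theta$ set $u^A\equiv 0$, $u^C_m=\delta_{m,0}$, and solve the $A$-equation in \eqref{def:H_alpha_theta} for $u^B$, namely $u^B_m=-t\,(u^C_m+u^C_{m-1})/\mathcal{K}_m(\theta+m\alpha)$, which is supported on $m\in\{0,1\}$; the $B$- and $C$-equations then hold automatically because $u^A\equiv0$. Using $|\mathcal{K}_m(\theta+m\alpha)|=2|\cos\pi(\theta+m\alpha)|$ from \eqref{def:Km}, the norm $\|\psi_\theta\|^2=1+t^2/|\mathcal{K}_0|^2+t^2/|\mathcal{K}_1|^2$ is uniformly bounded on the positive-measure set $S_\varepsilon=\{\theta:|\cos\pi\theta|\ge\varepsilon,\ |\cos\pi(\theta+\alpha)|\ge\varepsilon\}$. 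Hence $\Psi:=\mathbf{1}_{S_\varepsilon}(\theta)\,\psi_\theta$ lies in $\int_\T^\oplus\ell^2(\Z,\C^3)\,d\theta$, is nonzero, and satisfies $H_{\alpha,t}\Psi=0$, so $0$ is a genuine eigenvalue; this refines Theorem \ref{thm:zero}, whose zero-modes (Sec.~\ref{sec:zero_proof}) are exactly these. The argument uses only $t\neq 0$ and recovers the rational flat band as a special case.

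To get pure continuity on $\Sigma_{\alpha,t}$ I would transport the reduction of Sec.~\ref{sec:reduction}: eliminating $u^B,u^C$ via the $B$- and $C$-equations shows that, for $E\neq 0$, $u^A$ solves the almost Mathieu equation at coupling $\lambda=1/t^2$, with $E\in\sigma(H_{\alpha,t,\theta})\setminus\{0\}$ iff $(E^2-2-2t^2)/t^2\in\sigma(H^{\mathrm{AMO}}_{1/t^2,\alpha,\theta})$, the correspondence preserving the $\ell^2$ and spectral-measure structure. Consequently an eigenvalue of $H_{\alpha,t}$ inside $\Sigma_{\alpha,t}$ would force, for a positive-measure set of $\theta$, a \emph{common} AMO eigenvalue, that is, a flat band and hence an atom of the integrated density of states of the AMO. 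For irrational $\alpha$ this is ruled out by the classical continuity of the IDS of an ergodic Jacobi matrix (Delyon--Souillard), and for rational non-integer $\alpha$ by the non-degeneracy of the finitely many AMO Bloch bands; either way $H_{\alpha,t}$ has no eigenvalue in $\Sigma_{\alpha,t}$.

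Finally, for $t=1$ (so $\lambda=1$, the \emph{critical} AMO) and irrational $\alpha$ I would upgrade pure continuity to pure singular continuity: by Theorem \ref{thm:away_from_zero} the set $\Sigma_{\alpha,t}$ has Hausdorff dimension at most $1/2$, hence Lebesgue measure zero, so the spectral measure of $H_{\alpha,t}$ on $\Sigma_{\alpha,t}$ carries no absolutely continuous part; combined with the absence of eigenvalues this yields purely singular continuous spectrum. The main obstacle is the spectral-type transfer in the regime $t<1$ (equivalently $\lambda=1/t^2>1$), where the fibers $H_{\alpha,t,\theta}$ can be Anderson-localized with pure point spectrum, yet the two-dimensional operator must be purely continuous: the resolution is precisely that these localized fiber eigenvalues move with $\theta$ and never organize into a flat band, which is what continuity of the IDS encodes. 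Making the fiber-to-AMO correspondence respect the Lebesgue decomposition, so that ``no flat band'' genuinely excludes point spectrum of $H_{\alpha,t}$, is the step that must be carried out with care.
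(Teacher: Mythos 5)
Your proposal is correct, and for the main assertions it follows the paper's own route: the gap, Cantor, and Hausdorff-dimension claims are imported from Theorem \ref{thm:away_from_zero} (equivalently Theorem \ref{thm:sig_H} combined with Theorems \ref{thm:AMO_Cantor} and \ref{thm:AMO_Hausdorff}); absence of eigenvalues in $\Sigma_{\alpha,t}$ is proved exactly as in the paper, by Fourier-transforming a putative eigenfunction, noting that $u_\theta^A\neq 0$ solves the AMO equation at the \emph{fixed} energy $g_t(E)=E^2t^{-2}-2t^{-2}-2$ for a positive-measure set of $\theta$, and contradicting continuity of the integrated density of states (the paper cites \cite[Theorem 2.9]{AS}; your Delyon--Souillard reference and your separate Floquet remark for rational $\alpha\notin\Z$ serve the same purpose); and for $t=1$ the absence of absolutely continuous spectrum follows from $\mathrm{mes}(\Sigma_{\alpha,1})=0$, which you extract from the dimension bound $\dim_H\leq 1/2$ while the paper gets it more economically from Theorem \ref{thm:critical_AMO_mes=1} (no need for \cite{JK} at this step). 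The one genuinely different ingredient is the point spectrum at $E=0$: the paper's proof is abstract --- $0\in\sigma(H_{\alpha,t})$ by Theorem \ref{thm:zero} and \eqref{eq:sigH^2=sigH^1}, and since $\mathrm{dist}(\Sigma_{\alpha,t},\{0\})>0$ the point $0$ is isolated in the spectrum, hence must be an atom, as an isolated spectral point cannot support a continuous spectral measure. You instead glue explicit fiber zero-modes: with $u^A\equiv 0$ the $B$- and $C$-equations hold trivially, the $A$-equation forces $u^B_m=-t(u^C_m+u^C_{m-1})/\mathcal{K}_m(\theta+m\alpha)$, supported on $m\in\{0,1\}$ when $u^C_m=\delta_{m,0}$, and cutting off to $S_\varepsilon$ makes $\theta\mapsto\psi_\theta$ a bounded, measurable, nonzero section annihilated fiberwise by $H_{\alpha,t,\theta}$, hence an honest $\ell^2(\Z^2,\C^3)$ zero-eigenfunction. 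This checks out, and it buys something the paper's argument does not: an explicit flat-band eigenfunction for all $t\neq 0$, $\alpha\notin\Z$, with no need to know beforehand that $0$ lies in, and is isolated in, the spectrum. The paper's argument is shorter but non-constructive.

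Two small inaccuracies, neither fatal. First, your zero-modes are not ``exactly'' those of Sec.~\ref{sec:zero_proof}: the Weyl vectors there sit on a single $B$-site at a phase $m_k$ where $\mathcal{K}_{m_k}$ is nearly zero, whereas yours live on $B$- and $C$-sites at phases where $\mathcal{K}_0,\mathcal{K}_1$ are bounded \emph{away} from zero; they are different constructions. Second, the ``step that must be carried out with care'' you flag at the end is not an outstanding gap for this theorem: all that is needed is the direct-integral fact that $(H_{\alpha,t}-E)u=0$ with $u\in\ell^2(\Z^2,\C^3)$ holds iff $(H_{\alpha,t,\theta}-E)u_\theta=0$ for a.e.\ $\theta$, together with the observation that for $E\neq 0$ the $B$- and $C$-equations express $u_\theta^B,u_\theta^C$ as multiples of $u_\theta^A$, so $u_\theta\neq 0$ forces $u_\theta^A\neq 0$; your no-flat-band argument then closes as written. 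Transporting the full Lebesgue decomposition through the fiber-to-AMO correspondence is genuinely needed only for the one-dimensional spectral decomposition (Theorem \ref{thm:spectral_dec}), not for the two-dimensional statement at hand.
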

The point spectrum claim for $E=0$ above follows from the fact that an isolated point in the spectrum cannot support continuous spectral measures.

Our next theorem address the spectral decomposition of the reduced one-dimensional operator $H_{\alpha,t,\theta}$.
The most interesting phenomenon is the coexistence of point spectrum at $E=0$ and singular/absolutely continuous spectrum in the bulk spectrum $\Sigma_{\alpha,t}$ for certain parameters specified below.

To introduce the arithmetic spectral transitions, we define two indices $\beta(\alpha), \gamma(\alpha, \theta)\in [0, +\infty]$ by
\begin{align}
\beta(\alpha)&:=\limsup_{n\rightarrow\infty}\left(-\frac{\ln{\|n\alpha\|_{\T}}}{|n|}\right), \label{defbeta}\\
\gamma(\alpha, \theta)&:=\limsup_{n\rightarrow\infty}\left(-\frac{\ln{\|2\theta+n\alpha\|_{\T}}}{|n|}\right),\label{defgamma}
\end{align}
where $\|x\|_{\T}:=\text{dist} (x, \Z)$.
It is well-known that $\beta(\alpha)=0$ holds for a.e. $\alpha$, and for any fixed $\alpha$, $\gamma(\alpha,\theta)=0$ holds for a.e. $\theta$.

\begin{theorem}\label{thm:spectral_dec}
For $t\neq 0$ and any $\alpha\notin \Z$, any $\theta$, $E=0$ is a point spectrum of $H_{\alpha,t,\theta}$.
For irrational $\alpha$'s, we have the following spectral decompositions.
\begin{itemize}
    \item For $0<t<1$,
\begin{enumerate}
    \item for $\alpha$ such that $0\leq \beta(\alpha)<\log t^{-2}$ and $\theta$ such that $\gamma(\alpha,\theta)=0$, $H_{\alpha,t,\theta}$ exhibits Anderson localization in $\Sigma_{\alpha,t}$.    
    \item for $\alpha$ such that $\beta(\alpha)>\log t^{-2}$ and all $\theta\in \T$, $H_{\alpha,t,\theta}$ has purely singular continuous spectrum in $\Sigma_{\alpha,t}$.
    \item for $\alpha$ such that $\beta(\alpha)=0$ and $\theta$ such that $\gamma(\alpha,\theta)<\log t^{-2}$, $H_{\alpha,t,\theta}$ exhibits Anderson localization in $\Sigma_{\alpha,t}$.
    \item for $\alpha$ such that $\beta(\alpha)=0$ and $\theta$ scuh that $\gamma(\alpha,\theta)>\log t^{-2}$, $H_{\alpha,t,\theta}$ has purely singular continuous spectrum in $\Sigma_{\alpha,t}$.
\end{enumerate}
\item For $t>1$, any irrational $\alpha$ and any $\theta\in\T$, $H_{\alpha,t,\theta}$ has purely absolutely continuous spectrum in $\Sigma_{\alpha,t}$.
\item For $t=1$, any irrational $\alpha$ and any $\theta\in \T$, $H_{\alpha,t,\theta}$ has purely singular continuous spectrum in $\Sigma_{\alpha,t}$.
\end{itemize}
\end{theorem}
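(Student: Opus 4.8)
The plan is to promote the spectrum-level reduction that already proves Theorem~\ref{thm:away_from_zero} to an operator-level ``squaring'' identity, transfer the fine spectral type (not just the spectrum) across it via the bipartite/supersymmetric structure, and then read off each clause from the known phase diagram of the almost Mathieu operator (AMO). First I would split $\ell^2(\Z,\C^3)=\ell^2_A\oplus\ell^2_{BC}$ according to the bipartite structure, so that $H_{\alpha,t,\theta}$ is purely off-diagonal, $H_{\alpha,t,\theta}=\left(\begin{smallmatrix}0 & Q\\ Q^* & 0\end{smallmatrix}\right)$, whence $H_{\alpha,t,\theta}^2=QQ^*\oplus Q^*Q$. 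Using $|\mathcal{K}_m(\theta+m\alpha)|^2=2+2\cos(2\pi(\theta+m\alpha))$, a direct computation of the $A$-block gives
\[
(QQ^*u)^A_m=t^2\big(u^A_{m+1}+u^A_{m-1}\big)+\big(2+2t^2+2\cos(2\pi(\theta+m\alpha))\big)u^A_m,
\]
that is $QQ^*=t^2 H_{\mathrm{AMO}}+(2+2t^2)$, where $H_{\mathrm{AMO}}$ is the almost Mathieu operator with coupling $\lambda=t^{-2}$, frequency $\alpha$, and phase $\theta$. Since an affine change of the spectral parameter preserves the Lebesgue (pp/sc/ac) decomposition, the spectral type of $QQ^*$ at any point of its spectrum equals that of the AMO at the corresponding point, and the Lyapunov exponent on the spectrum equals $\max(0,\log t^{-2})$, so the transition thresholds $\log t^{-2}$ appear exactly as in the statement.

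Next I would transfer spectral type from $QQ^*$ to $H_{\alpha,t,\theta}$. By the polar decomposition of $Q$, the restriction of $H_{\alpha,t,\theta}$ to $(\ker H_{\alpha,t,\theta})^{\perp}$ is unitarily equivalent to $(QQ^*)^{1/2}\oplus\big(-(QQ^*)^{1/2}\big)$; concretely, if $QQ^*\psi=\mu\psi$ with $\mu>0$ then $\Psi_\pm=(\psi,\pm\mu^{-1/2}Q^*\psi)$ satisfies $H_{\alpha,t,\theta}\Psi_\pm=\pm\sqrt\mu\,\Psi_\pm$. Thus the spectral measure of $H_{\alpha,t,\theta}$ on the bulk is the pushforward of that of $QQ^*$ under the diffeomorphisms $\mu\mapsto\pm\sqrt{\mu}$ of $(0,\infty)$, and pushforward by a diffeomorphism preserves the pp/sc/ac decomposition. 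Because $Q^*$ is finite-range, $\psi\mapsto\Psi_\pm$ preserves exponential decay, so Anderson localization, purely singular continuous spectrum, and purely absolutely continuous spectrum of the AMO translate verbatim into the same types for $H_{\alpha,t,\theta}$ on $\Sigma_{\alpha,t}$. I would also note that the index $\gamma(\alpha,\theta)$ of \eqref{defgamma} is precisely the phase-resonance exponent $\limsup -|n|^{-1}\ln\|2\theta+n\alpha\|_{\T}$ governing the AMO at phase $\theta$, so no reindexing is required.

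With the reduction in place, each clause follows from the AMO phase diagram at $\lambda=t^{-2}$. For $t>1$ one has $\lambda<1$ (subcritical) and invokes Avila's theorem on purely absolutely continuous spectrum. For $t=1$ one has $\lambda=1$ (critical) and invokes the purely singular continuous spectrum of the critical AMO for all irrational $\alpha$ and all $\theta$ (zero-measure spectrum excludes ac, absence of eigenvalues excludes pp). For $0<t<1$ one has $\lambda>1$ (supercritical) with sharp arithmetic transitions: the Avila--You--Zhou frequency transition yields localization when $\beta(\alpha)<\log t^{-2}$ and $\theta$ is non-resonant ($\gamma=0$, case~(1)) and purely singular continuous spectrum when $\beta(\alpha)>\log t^{-2}$ (case~(2)), while the Jitomirskaya--Liu phase transition at $\beta(\alpha)=0$ yields localization for $\gamma(\alpha,\theta)<\log t^{-2}$ (case~(3)) and purely singular continuous spectrum for $\gamma(\alpha,\theta)>\log t^{-2}$ (case~(4)). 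Finally, that $E=0$ is always point spectrum is immediate from Theorems~\ref{thm:zero} and~\ref{thm:away_from_zero}: $0$ lies in the spectrum but is isolated from $\Sigma_{\alpha,t}$, and an isolated spectral point is an eigenvalue; this produces the advertised coexistence of a point mass at $0$ with continuous spectrum in the bulk.

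The hard part will not be the AMO inputs, which are standard, but the rigorous transfer of the \emph{fine} spectral type across the squaring map while correctly isolating the kernel. I would need to verify that no spectral weight is lost or manufactured at $E=0$, that the $A$- and $BC$-blocks pair up so the pushforward description is an honest unitary equivalence on $(\ker H_{\alpha,t,\theta})^{\perp}$ (in particular that $\sigma(QQ^*)$ and $\sigma(Q^*Q)$ agree off $0$ and carry the same type), and that exponential decay of eigenfunctions is preserved in both directions so that localization — not merely pure point spectrum — genuinely descends to $H_{\alpha,t,\theta}$. Getting these bookkeeping points exactly right, rather than the spectral analysis of the AMO itself, is where the real care is needed.
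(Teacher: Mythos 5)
Your proposal is correct and its core strategy is exactly the paper's: split along the bipartite structure so that $H_{\alpha,t,\theta}=\left(\begin{smallmatrix}0&\tilde H\\ \tilde H^*&0\end{smallmatrix}\right)$, square to get \eqref{eq:H2=diagonal}, identify the $A$-block as $\tilde H_{\alpha,t,\theta}\tilde H^*_{\alpha,t,\theta}=t^2H^{\mathrm{AMO}}_{\alpha,\theta,t^{-2}}+2+2t^2$ as in \eqref{eq:HH*=HAMO}, and read the clauses off the AMO phase diagram, with $E=0$ a point spectrum because it is isolated. Where you differ is the transfer mechanism. For the sc/ac clauses the paper works at the level of $H^2$ and the unitary equivalence \eqref{eq:uni_equiv_no_ker}, leaving the passage from $H^2$ back to $H$ implicit, whereas your polar-decomposition statement --- $H_{\alpha,t,\theta}$ restricted to $(\ker H_{\alpha,t,\theta})^\perp$ is unitarily equivalent to $(QQ^*)^{1/2}\oplus\bigl(-(QQ^*)^{1/2}\bigr)$, with the explicit lifts $\Psi_\pm$ --- makes that passage explicit; this is a cleaner formulation of the same fact, but it is only valid because $\ker \tilde H^*_{\alpha,t,\theta}=\{0\}$ (the paper verifies this in \eqref{eq:ker=empty}: $t(u_m+u_{m+1})=0$ has no nonzero $\ell^2$ solution --- you flag this as bookkeeping but must actually check it, else the $\pm$ pairing on $(\ker H)^\perp$ fails) and because $\tilde H\tilde H^*\geq C_{2,\alpha,t}>0$ (Proposition \ref{prop:away_from_zero}) keeps $\mu\mapsto\pm\sqrt\mu$ a diffeomorphism near the bulk. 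For localization the paper takes a genuinely different route from your eigenfunction lifting: it invokes Shnol's theorem for $H_{\alpha,t,\theta}$ and shows that any polynomially bounded generalized eigenfunction $u$ with $E\neq 0$ produces a generalized AMO eigenfunction $u^A$ as in \eqref{eq:H_AMO_utheta=Eutheta}, then pulls decay of $u^B,u^C$ from \eqref{eq:uBC=uA}; this exploits that Theorems \ref{thm:AMO_loc} and \ref{thm:AMO_loc2} assert exponential decay of \emph{all} generalized eigenfunctions, and it sidesteps the completeness bookkeeping your lifting requires (your route does also work: pure point spectrum of $QQ^*$ transfers through the unitary equivalence, and finite range of $\tilde H^*$ preserves exponential decay of the lifted eigenvectors). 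One attribution correction: cases (1)--(2), with the arithmetic phase condition $\gamma(\alpha,\theta)=0$, are due to Jitomirskaya--Liu \cite{JL1}, not Avila--You--Zhou; \cite{AYZ} proves localization only under a non-arithmetic (measure-theoretic) condition on $\theta$, as the paper's remark notes, so quoting AYZ alone would not yield case (1) as stated.
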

One can also show generic point/singular continuous spectrum on the transition lines $\beta(\alpha)=\log t^{-2}$ and $\gamma(\alpha,\theta)=\log t^{-2}$, similar to those results on the almost Mathieu operator \cite{AJZ,JY}, but we don't pursue them here.

The rest of the paper is organized as follows: 
Sec. \ref{sec:AMO} is devoted to the review some relevant results on the almost Mathieu operator. 
We present the proofs of our main theorems in Sec. \ref{sec:proof}, and the results for the Lieb lattice with more general hoppings in Sec. \ref{sec:general}.

\section{Historical results on the almost Mathieu operator}\label{sec:AMO}
In this section, we recall some results on the almost Mathieu opertoat that play important roles in our study of the Lieb lattice.
\subsection{Cantor spectrum}
\

The \textit{Ten Martini problem}, named after Kac and Simon, for the almost Mathieu operator $H^{\mathrm{AMO}}_{\alpha,\theta,t^{-2}}$ conjectures that its spectrum $\sigma(H^{\mathrm{AMO}}_{\alpha,\theta,t^{-2}})$ is a Cantor set for $t\neq 0$ and any irrational $\alpha$.
This problem has been fully resolved, the complete proof was given in Avila-Jitomirskaya \cite{AJ} (for $t\neq 1$) and Avila-Krikorian \cite{AK} (for $t=1$), with earlier breakthroughs by Last \cite{La} and Puig \cite{Pu}.
\begin{theorem}\label{thm:AMO_Cantor}
For $t\neq 0$ and any irrational $\alpha$, $\sigma(H^{\mathrm{AMO}}_{\alpha,\theta,t^{-2}})$ is a Cantor set.
\end{theorem}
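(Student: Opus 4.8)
The statement is the celebrated \emph{Ten Martini Problem}, and any honest plan must follow the architecture of the Avila--Jitomirskaya and Avila--Krikorian solutions rather than hope for a short argument. Writing $\lambda=t^{-2}$ for the coupling and recalling that for irrational $\alpha$ the set $\Sigma:=\sigma(H^{\mathrm{AMO}}_{\alpha,\theta,t^{-2}})$ is independent of $\theta$, the plan is to prove that $\Sigma$ is nowhere dense and perfect. The backbone is the \emph{gap labelling theorem}: the integrated density of states $N(E)$ is constant on each connected component of $\R\setminus\Sigma$ and takes values in the countable set $\{\,n\alpha \bmod 1 : n\in\Z\,\}$, which is dense in $[0,1]$ since $\alpha$ is irrational. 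Hence it suffices to show that for every nonzero label $n$ the corresponding gap is genuinely open; a dense family of open gaps forces $\Sigma$ to be nowhere dense, while perfectness is automatic for these ergodic operators, whose spectra contain no isolated points.

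First I would pass to the cocycle picture. The transfer matrices generate the Schr\"odinger cocycle $(\alpha,A_E)$ with $A_E(\theta)=\left(\begin{smallmatrix} E-2\lambda\cos 2\pi\theta & -1 \\ 1 & 0\end{smallmatrix}\right)$, whose Lyapunov exponent satisfies $L(E)=\max\{0,\log|\lambda|\}$ on $\Sigma$ (Bourgain--Jitomirskaya, Avila). The decisive structural tool is \emph{Aubry duality}, which conjugates coupling $\lambda$ to coupling $\lambda^{-1}$ and exchanges localized eigenfunctions of one operator with analytic reducibility of the dual cocycle. This lets me treat $\lambda$ and $\lambda^{-1}$ symmetrically and, in the noncritical case $\lambda\neq 1$, move between the localization regime $|\lambda|>1$ and the reducibility regime $|\lambda|<1$.

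The argument then splits according to the arithmetic of $\alpha$. For Diophantine $\alpha$ (e.g. $\beta(\alpha)=0$) and $\lambda\neq 1$, I would follow Puig: use Jitomirskaya's non-perturbative localization for the dual operator at a gap edge to produce an analytic reduction of $(\alpha,A_E)$ to a constant rotation, note that the rotation number at a labelled gap is rational modulo $\alpha\Z+\Z$, and invoke the Moser--P\"oschel analysis to conclude that a reducible edge with such a rotation number cannot correspond to a collapsed gap, so the gap opens. For Liouville $\alpha$ (large or infinite $\beta(\alpha)$) and $\lambda\neq 1$, I would instead exploit continuity of $\Sigma$ in the Hausdorff metric under rational approximation $p_k/q_k\to\alpha$, combined with quantitative lower bounds on the gap sizes of the periodic approximants that persist in the limit. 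Together these two regimes yield the noncritical theorem.

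The hardest part is the critical case $\lambda=1$ (that is, $t=1$), where $L(E)=0$ on $\Sigma$ and every perturbative and hyperbolicity-based tool fails at once. Here the plan is to invoke the Avila--Krikorian reducibility-or-nonuniform-hyperbolicity dichotomy: for $\alpha$ satisfying a recurrent Diophantine condition (a full-measure set of frequencies) one renormalizes the cocycle and shows that almost every energy is reducible, which again forces the labelled gaps open, while the remaining Liouville $\alpha$ are handled by Last's zero-measure and continuity estimates. I expect this critical analysis---the renormalization scheme and the global theory of one-frequency analytic cocycles underlying it---to be the principal obstacle; by comparison, everything else is a matter of assembling existing localization, duality, and continuity inputs.
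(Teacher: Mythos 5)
Your outline is essentially the paper's own treatment: the paper does not prove this statement but imports it from the literature, citing Avila--Jitomirskaya \cite{AJ} for the noncritical case $t\neq 1$ and Avila--Krikorian \cite{AK} together with Last \cite{La} for $t=1$ (with Puig \cite{Pu} supplying the localization-plus-reducibility gap-opening argument you describe), and your gap-labelling/duality/continuity roadmap is a faithful summary of exactly those cited works. The one point to adjust is the critical case: there the mechanism is not gap opening via reducibility but zero Lebesgue measure of the spectrum (Last and Avila--Krikorian, recorded as the paper's Theorem \ref{thm:critical_AMO_mes=1}), which together with the absence of isolated points for ergodic operators already forces the Cantor property.
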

In fact, for the $t=1$ case, Last \cite{La} and Avila-Krikorian \cite{AK} proved the following stronger result, which implies Cantor spectrum as a corollary.
\begin{theorem}\label{thm:critical_AMO_mes=1}
    For $t=1$ and any irrational $\alpha$, the Lebesgue measure $\mathrm{mes}(\sigma(H^{\mathrm{AMO}}_{\alpha,\theta,t^{-2}}))=0$.
\end{theorem}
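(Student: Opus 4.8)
The plan is to read this as the critical case $\lambda := t^{-2} = 1$ of the Aubry--Andr\'e measure formula $\mathrm{mes}(\sigma(H^{\mathrm{AMO}}_{\alpha,\theta,\lambda})) = 4|1-\lambda|$, whose right-hand side vanishes precisely at the self-dual coupling $\lambda = 1$. Since for irrational $\alpha$ the spectrum is independent of $\theta$ as a set, it suffices to bound $\mathrm{mes}(\sigma)$ for one $\theta$ (equivalently for the $\theta$-union); write $\sigma_\alpha$ for this set. I would split the argument according to the arithmetic of $\alpha$, since no single elementary method reaches all irrationals.

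\emph{Regime 1 (fast rational approximation; after Last \cite{La}).} For $p/q$ in lowest terms, Chambers' formula writes $\Sigma_{p/q} := \bigcup_\theta \sigma(H^{\mathrm{AMO}}_{p/q,\theta,1})$ as the sublevel set $\{E : |P_q(E)| \le 4\}$ of the degree-$q$ discriminant $P_q$ (the trace of the one-period transfer matrix), a union of at most $q$ bands. The core input is the critical bandwidth bound $\mathrm{mes}(\Sigma_{p/q}) \le C/q$, which I would extract from the Thouless formula linking the integrated density of states to $\log|P_q|$ together with the vanishing of the Lyapunov exponent on the spectrum at $\lambda=1$ (a consequence of self-duality, $L\equiv 0$ on $\Sigma$). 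I would then pass to irrational $\alpha$ using the H\"older-$\tfrac12$ continuity of the spectrum in the Hausdorff metric, $d_H(\sigma_\alpha,\sigma_{\alpha'}) \le C|\alpha-\alpha'|^{1/2}$: choosing convergents $p_k/q_k \to \alpha$ with $|\alpha - p_k/q_k| = o(q_k^{-2})$ gives $d_H(\sigma_\alpha, \Sigma_{p_k/q_k}) = o(q_k^{-1})$, so $\sigma_\alpha$ sits in an $o(q_k^{-1})$-neighborhood of the $q_k$-band set, of measure at most $\mathrm{mes}(\Sigma_{p_k/q_k}) + 2 q_k \cdot o(q_k^{-1}) = O(q_k^{-1}) + o(1) \to 0$.

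The delicate point --- and the reason Regime 1 does not finish the proof --- is that this neighborhood estimate is exactly borderline: $d_H \sim q_k^{-1}$ is the same order as the band spacing, so the $2q_k d_H$ error term vanishes only when $|\alpha - p_k/q_k| = o(q_k^{-2})$. This holds for $\alpha$ with unbounded partial quotients (a full-measure set), but fails for bounded-type and more general Diophantine $\alpha$, where $\|q\alpha\|_{\T} \gtrsim q^{-1}$ prohibits such fast approximation. These frequencies are the genuine obstacle and require a dynamical argument.

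\emph{Regime 2 (Diophantine $\alpha$; after Avila--Krikorian \cite{AK}).} Here I would work with the Schr\"odinger cocycle over rotation by $\alpha$. At $\lambda=1$ Herman's subharmonicity bound gives $L(E) \ge \log\lambda = 0$, while $L(E)=0$ on the spectrum, so the cocycle is neither uniformly nor nonuniformly hyperbolic there. I would then invoke the renormalization / almost-reducibility machinery of \cite{AK}: for recurrent Diophantine $\alpha$ the dichotomy ``reducibility or nonuniform hyperbolicity'' shows that at almost every energy in $\sigma$ the cocycle is almost reducible to a rotation, which confines such energies to spectral gaps or band edges and yields $\mathrm{mes}(\sigma)=0$. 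Establishing almost reducibility uniformly over the spectrum and converting it into the measure estimate is the deep step, and I expect to rely on \cite{AK} wholesale here, as I see no elementary route for bounded-type frequencies.
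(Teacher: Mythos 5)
Your proposal is correct and follows essentially the same route as the paper, which offers no proof of this statement but quotes it as the combination of Last \cite{La} (all $\alpha$ with unbounded partial quotients, via Chambers' formula, the $O(1/q)$ critical total-bandwidth bound, and H\"older-$\tfrac12$ continuity of the spectrum in $\alpha$) and Avila--Krikorian \cite{AK} (recurrent Diophantine $\alpha$, which in particular covers the remaining bounded-type frequencies). Your two regimes reproduce exactly this division of labor, including the observation that the two frequency classes together exhaust the irrationals, so apart from minor heuristic glosses (e.g., in \cite{AK} reducible energies on the spectrum are eliminated via Aubry duality and Kotani theory rather than by ``confinement to gaps or band edges'') your argument is precisely the one the paper relies on.
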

More recently, Jitomirskaya and Krasovsky proved the following upper bound of the Hausdorff dimension of the critical AMO ($t=1$) spectrum.
\begin{theorem}\cite{JK}\label{thm:AMO_Hausdorff}
    For $t=1$ and any irrational $\alpha$, the Hausdorff dimension of $\sigma(H_{\alpha,\theta,1}^{\mathrm{AMO}})$ is at most $1/2$.
\end{theorem}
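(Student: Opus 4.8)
The plan is to bound the Hausdorff dimension from above by covering the spectrum with the spectral bands of periodic (rational) approximants and summing their half-powers. Fix the critical coupling ($t=1$, so $t^{-2}=1$), and let $p_n/q_n\to\alpha$ denote the continued fraction approximants of the irrational $\alpha$.

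First I would reduce to the periodic problem. For a rational frequency $p/q$ the operator $H^{\mathrm{AMO}}_{p/q,\theta,1}$ is $q$-periodic, and by Chambers' relation $\mathrm{tr}\,M_q(E,\theta)=\Delta_q(E)+2(-1)^q\cos(2\pi q\theta)$, where $\Delta_q$ is a real polynomial of degree $q$ in $E$ that does not depend on $\theta$. Hence the union over $\theta$ of the spectra is exactly $\{E:|\Delta_q(E)|\le 4\}$, a set of at most $q$ closed bands $B_1,\dots,B_q$. Criticality enters through the vanishing of the Lyapunov exponent on the spectrum: by the Thouless formula and the measure estimate underlying Theorem~\ref{thm:critical_AMO_mes=1}, the total length obeys $\sum_j|B_j|=O(1/q)$ with a uniform constant.

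At this point the half-power sum over the periodic bands is already under control: by Cauchy--Schwarz, $\sum_{j=1}^q|B_j|^{1/2}\le q^{1/2}\big(\sum_j|B_j|\big)^{1/2}=q^{1/2}\cdot O(q^{-1/2})=O(1)$, uniformly in $q$ and regardless of how unevenly the band lengths are distributed. Thus the $1/2$-dimensional Hausdorff content of each \emph{periodic} spectrum is bounded, and band non-uniformity is not itself the difficulty.

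The main obstacle --- and the genuine content of \cite{JK} --- is the passage to the limit, i.e. covering the true irrational spectrum $\sigma(H^{\mathrm{AMO}}_{\alpha,\theta,1})$ by the periodic bands at scale $q_n$. Each band must be enlarged to absorb the discrepancy between $\sigma(H_\alpha)$ and $\sigma(H_{p_n/q_n})$, and the generic H\"older-$1/2$ continuity of the spectrum in the frequency only bounds this discrepancy by $C|\alpha-p_n/q_n|^{1/2}\sim(q_nq_{n+1})^{-1/2}$. Enlarging all $q_n$ bands by this amount can destroy the $O(1)$ bound on $\sum_j|B_j|^{1/2}$ precisely when $\alpha$ is not Liouville (e.g. bounded type, where $q_{n+1}\sim q_n$). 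To treat all irrational $\alpha$ uniformly one needs sharp, band-by-band gap-continuity estimates --- the ``hidden singularity'' of the critical cocycle --- controlling how far each individual band genuinely moves as $p_n/q_n\to\alpha$, so that the enlarged half-power sum stays bounded along a scale dictated by the continued fraction expansion. Feeding these into the covering argument and letting $n\to\infty$ yields finite $(1/2)$-dimensional Hausdorff content for $\sigma(H^{\mathrm{AMO}}_{\alpha,\theta,1})$, hence $\dim_H\le 1/2$. All the difficulty resides in this last step; the band count, the measure bound, and the Cauchy--Schwarz estimate are comparatively routine.
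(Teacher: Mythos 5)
The paper offers no proof of Theorem \ref{thm:AMO_Hausdorff}: it is quoted verbatim from Jitomirskaya--Krasovsky \cite{JK} and used as a black box, so there is no internal argument to compare yours against; your proposal has to be judged as a reconstruction of the proof in \cite{JK}. As such it is a faithful roadmap but not a proof. Everything you actually carry out --- Chambers' relation, the identification of $\bigcup_\theta\sigma(H^{\mathrm{AMO}}_{p/q,\theta,1})$ with $\{E:|\Delta_q(E)|\le 4\}$ and its at most $q$ bands, the total-length bound $O(1/q)$ (which is the classical Avron--van Mouche--Simon/Last estimate for rational frequencies; it does not follow from the Thouless formula, nor from Theorem \ref{thm:critical_AMO_mes=1}, whose bounded-type case rests on entirely different, dynamical arguments), and the Cauchy--Schwarz bound $\sum_j|B_j|^{1/2}\le q^{1/2}\bigl(\sum_j|B_j|\bigr)^{1/2}=O(1)$ --- was available decades before \cite{JK}. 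Your diagnosis of the obstruction is also correct: with only H\"older-$1/2$ continuity the enlargement contributes roughly $q_n(q_nq_{n+1})^{-1/4}$ to the half-power sum, which for bounded-type $\alpha$ (where $q_{n+1}\asymp q_n$) grows like $q_n^{1/2}$, so the naive covering fails exactly for the frequencies that earlier methods could not reach.

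The genuine gap is that the decisive step is invoked rather than proved. You write that one needs ``sharp, band-by-band gap-continuity estimates'' and then ``feeds these into the covering argument,'' but you never formulate the estimate --- how far an individual band of the periodic spectrum can move as $p_n/q_n\to\alpha$, quantified in terms of $q_n$ and $q_{n+1}$ so that the enlarged half-power sum stays bounded for \emph{all} irrational $\alpha$ --- and you give no indication of how the ``hidden singularity'' of the normalized Chambers function produces it. That analysis is the entire content of \cite{JK}; with it cited as a black box, your argument establishes nothing beyond the routine $O(1)$ bound for the periodic approximants, which you yourself label as the easy part. Two smaller omissions: to get $\dim_H\le 1/2$ from bounded $1/2$-content you should note that the diameters of the enlarged covering intervals tend to zero (true, since each band has length $O(1/q_n)$ and the enlargement is $o(1)$), and you should record that for irrational $\alpha$ the spectrum of $H^{\mathrm{AMO}}_{\alpha,\theta,1}$ is independent of $\theta$, which is what legitimizes covering the fixed-$\theta$ spectrum by the $\theta$-union of periodic bands.
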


\subsection{Spectral decompositions}
\

We start by recalling the spectral decomposition of the almost Mathieu operator.
The first ground-breaking result is due to Jitomirskaya \cite{Ji}.
\begin{theorem}\label{thm:AMO_loc}\cite{Ji}
    For $0<t<1$, $\alpha\in \mathrm{DC}$ \footnote{$\mathrm{DC}:=\bigcup_{c>0, a>1}\mathrm{DC}_{c,a}$, where $\mathrm{DC}_{c,a}:=\{\alpha\in \T:\, \|n\alpha\|_{\T}\geq \frac{c}{|n|^a}, \text{ for any } n\neq 0\}$.} and (arithmetically defined) a.e. $\theta$, any non-trivial generalized eigenfunction of $H^{\mathrm{AMO}}_{\alpha,\theta,t^{-2}}v=Ev$ decays exponentially, in particular $H^{\mathrm{AMO}}_{\alpha,\theta,t^{-2}}$ exhibits Anderson localization.
\end{theorem}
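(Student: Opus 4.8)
The plan is to follow the non-perturbative route to Anderson localization. Fix $\theta$ in the full-measure arithmetic set (those with $\gamma(\alpha,\theta)=0$, so that $2\theta$ is non-resonant with $\alpha\Z$), let $E$ be a generalized eigenvalue, and let $v$ be a corresponding generalized eigenfunction normalized to grow at most polynomially; this polynomial bound holds for spectrally a.e.\ $E$ by Schnol's theorem, which suffices since localization is a statement about such $E$. Because the coupling is $\lambda:=t^{-2}>1$, the first ingredient I would invoke is that the Lyapunov exponent satisfies $L(E)=\log\lambda=\log t^{-2}>0$ for every $E$ in the spectrum; for the almost Mathieu operator this value is exact (Herman's subharmonic lower bound together with the reverse bound). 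The goal is then to upgrade positivity of $L$ into genuine exponential decay of $v$.

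The mechanism is the block-resolvent (Poisson) formula: for any interval $I=[x_1,x_2]\ni y$ one has
$$v(y)=-G_I(y,x_1)\,v(x_1-1)-G_I(y,x_2)\,v(x_2+1),$$
where $G_I$ is the Green's function of $(H^{\mathrm{AMO}}_{\alpha,\theta,\lambda}-E)$ restricted to $I$. By Cramer's rule the matrix elements of $G_I$ are ratios of the finite-volume determinants
$$P_{[a,b]}(\theta):=\det\big((H^{\mathrm{AMO}}_{\alpha,\theta,\lambda}-E)|_{[a,b]}\big),$$
so controlling $v$ reduces to two-sided estimates on these determinants. The upper bound $|P_{[a,b]}(\theta)|\le e^{(L+\e)(b-a+1)}$ on all large intervals is automatic from uniform upper semicontinuity of $\tfrac1n\log\|A_n\|\to L$ (subharmonicity). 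I would then call a site $y$ regular at scale $k$ if there is an interval $I\ni y$ of length $k$, with $y$ well inside $I$, on which $|G_I(y,x_i)|\le e^{-(L-\e)\,\mathrm{dist}(y,x_i)}$; iterating the Poisson formula over a chain of regular sites forces $|v(y)|\le e^{-(L-2\e)|y|}$, which is the desired localization.

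Everything therefore comes down to a \emph{lower} bound on the denominator $|P_I(\theta)|$, i.e.\ to showing that every sufficiently distant site is regular. Here I would exploit the structural fact special to the cosine potential: $P_k(\theta)$ is, up to an explicit factor, a polynomial of degree $k$ in the single variable $\cos 2\pi(\theta+\tfrac{k-1}{2}\alpha)$, and in particular it is even about the center of the interval. Since a degree-$k$ polynomial cannot be small at too many points, among the shifted phases $\{\theta+j\alpha\}$ only a controlled number of intervals can have an anomalously small determinant; the Diophantine condition $\alpha\in\mathrm{DC}$ (so $\beta(\alpha)=0$) guarantees these exceptional intervals are spread out and can be avoided when choosing $I$ around a given $y$. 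The truly delicate point is the elimination of \emph{double resonances}: the configuration in which two well-separated regions both fail to be regular simultaneously. This is exactly where the arithmetic condition on $\theta$ enters—the evenness of $P_k$ couples the behaviour near a site $x$ to that near the reflected site, and a double resonance would force $\|2\theta+n\alpha\|_{\T}$ to be exponentially small for some moderate $n$, contradicting $\gamma(\alpha,\theta)=0$.

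I expect this last step, the quantitative exclusion of double resonances via the reflection symmetry of $P_k$ and the non-resonance of $2\theta$, to be the main obstacle, as it is the heart of the argument; by contrast the positivity of $L$ and the Poisson-formula bookkeeping are comparatively routine once the determinant estimates are in hand.
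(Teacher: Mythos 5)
The paper does not prove this statement at all: it is quoted as a known theorem from \cite{Ji}, so the only meaningful comparison is with Jitomirskaya's original argument. Your plan --- Schnol reduction to polynomially bounded solutions, Herman's lower bound $L(E)\ge \log t^{-2}>0$ (only the lower bound is needed; the exact formula on the spectrum postdates \cite{Ji}), the Poisson-formula bookkeeping with Green's function entries expressed via Cramer's rule through the determinants $P_k$, the structural fact that $P_k$ is a degree-$k$ polynomial in $\cos 2\pi\bigl(\theta+\frac{k-1}{2}\alpha\bigr)$, Lagrange interpolation to limit the number of small-determinant phase intervals under $\alpha\in\mathrm{DC}$, and the exclusion of double resonances via the reflection symmetry together with the non-resonance of $2\theta$ (your condition $\gamma(\alpha,\theta)=0$ is a full-measure arithmetic condition sufficient for, and consistent with, the set used in \cite{Ji}) --- is precisely the non-perturbative scheme of \cite{Ji}, and you correctly identify the double-resonance elimination as the heart of the proof.
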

The recent years have seen a lot of advances in further understanding the arithmetic spectral transition beyond Diophantine $\alpha\in \mathrm{DC}$. 
The most recent results for the almost Mathieu operators in the positive Lyapunov exponent regime are due to Jitomirskaya and Liu:
\begin{theorem}\cite{JL1}
    For $0<t<1$, 
    \begin{itemize}
        \item for irrational $\alpha$ such that $0\leq \beta(\alpha)<\log t^{-2}$, and $\theta$ such that $\gamma(\alpha,\theta)=0$, any non-trivial generalized eigenfunction of $H^{\mathrm{AMO}}_{\alpha,\theta,t^{-2}}v=Ev$ decays exponentially, in particular $H^{\mathrm{AMO}}_{\alpha,\theta, t^{-2}}$ exhibits Anderson localization. 
        \item for irrational $\alpha$ such that $\beta(\alpha)>\log t^{-2}$ and all $\theta\in \T$, $H^{\mathrm{AMO}}_{\alpha,\theta, t^{-2}}$ has purely singular continuous spectrum.
    \end{itemize}
\end{theorem}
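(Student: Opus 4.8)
The statement is the sharp arithmetic spectral transition of Jitomirskaya--Liu for the supercritical almost Mathieu operator, and the plan below is how one would establish it directly. The first thing I would record is the input that makes the regime supercritical: for coupling $t^{-2}>1$ (i.e. $0<t<1$) the Lyapunov exponent is constant on the spectrum, $L(E)=\log t^{-2}>0$, by Herman's subharmonicity bound (sharp for the almost Mathieu operator by Avila's global theory). Positivity of $L$ immediately yields, through Ishii--Pastur--Kotani theory, the absence of absolutely continuous spectrum throughout $\sigma(H^{\mathrm{AMO}}_{\alpha,\theta,t^{-2}})$, which disposes of the a.c. component in both parts of the theorem.

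For the localization part ($0\le \beta(\alpha)<\log t^{-2}$ and $\gamma(\alpha,\theta)=0$), the plan is to show that every polynomially bounded generalized solution of $H^{\mathrm{AMO}}_{\alpha,\theta,t^{-2}}v=Ev$ decays exponentially, so that pure point spectrum follows by Schnol's theorem. The machinery is the block-resolvent expansion: writing $v(k)$ through finite-interval Green's functions $G_{[a,b]}(k,\cdot)$ times boundary values, one calls a site $(\epsilon\text{-})$regular when it lies in an interval on which the Green's function decays at rate close to $L$. The analytic large-deviation theorem for the transfer-matrix cocycle (a consequence of positivity of $L$ together with subharmonicity) shows that an interval is good unless it meets a resonance, and the two resonance mechanisms are frequency resonances, governed by the smallness of $\|k\alpha\|_{\T}$ and hence by $\beta(\alpha)$, and phase resonances, governed by the smallness of $\|2\theta+k\alpha\|_{\T}$ and hence by $\gamma(\alpha,\theta)$. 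Under $\gamma(\alpha,\theta)=0$ there are no phase resonances, and under $\beta(\alpha)<L$ the frequency resonances are too weak to cancel the decay rate $L$; a covering and iteration argument then produces $|v(k)|\lesssim e^{-(L-\beta(\alpha)-\epsilon)|k|}$ for all large $|k|$, i.e. exponential decay.

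For the singular continuous part ($\beta(\alpha)>\log t^{-2}$, all $\theta$), since a.c. spectrum is already excluded it suffices to rule out eigenvalues. I would run a Gordon-type argument along the continued-fraction approximants $p_n/q_n$ of $\alpha$: the hypothesis $\beta(\alpha)>L$ gives $\|q_n\alpha\|_{\T}\le e^{-(\beta(\alpha)-\epsilon)q_n}$ along a subsequence, so the potential nearly repeats with period $q_n$ and the transfer matrix over a block is exponentially close to a periodic one. Combining this near-periodicity with the Cayley--Hamilton/three-block trace inequality forces $\max\{\|v\|_{[1,q_n]},\|v\|_{[q_n+1,2q_n]},\|v\|_{[-q_n+1,0]}\}$ to stay bounded below; because $\beta(\alpha)>L$ the resonant smallness $e^{-\beta(\alpha)q_n}$ beats the cocycle growth $e^{Lq_n}$, so no solution can lie in $\ell^2$. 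Hence there is no point spectrum, and with a.c. spectrum absent the spectrum is purely singular continuous.

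The main obstacle is the localization step: extracting the \emph{sharp} threshold $\beta(\alpha)=\log t^{-2}$, rather than a Diophantine or otherwise non-sharp criterion, requires a delicate simultaneous accounting of frequency and phase resonances and an exponentially accurate large-deviation estimate with the correct rate $L$, which is precisely the technical heart of the Jitomirskaya--Liu analysis.
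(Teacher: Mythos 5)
This statement is not proved in the paper at all: it is quoted verbatim from Jitomirskaya--Liu \cite{JL1} (with the singular continuous half first obtained by Avila--You--Zhou \cite{AYZ}, as the paper's own remark notes), so there is no internal proof to compare against --- the paper's ``proof'' is the citation. Judged against the cited works, your outline is a faithful description of the actual strategy: constancy of the Lyapunov exponent $L=\log t^{-2}$ on the spectrum (Herman's bound made sharp by Bourgain--Jitomirskaya continuity or Avila's global theory), Kotani theory to kill the a.c.\ component, a Schnol reduction plus block-resolvent expansion with frequency resonances controlled by $\beta(\alpha)$ and phase resonances by $\gamma(\alpha,\theta)$ for localization, and a Gordon-type trace argument along the denominators $q_n$ for the singular continuous regime. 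Two points would need to be made explicit to turn this into a proof. First, the Gordon step requires an upper bound $\|A_{q_n}(E,\theta)\|\le e^{(L+\epsilon)q_n}$ that is \emph{uniform} in $\theta$ and in $E$ over the spectrum; this follows from subadditivity and uniform upper semicontinuity for continuous cocycles, but your phrase ``cocycle growth $e^{Lq_n}$'' silently uses it. Second, and more seriously, the sharp threshold $\beta(\alpha)<L$ is not reachable by the classical regular/singular covering scheme you describe: the actual argument of \cite{JL1} couples Gordon-type lower bounds at resonant scales with Green's function estimates and produces the universal hierarchical structure of eigenfunctions, of which your envelope bound $|v(k)|\lesssim e^{-(L-\beta(\alpha)-\epsilon)|k|}$ is only a corollary --- you correctly flag this as the technical heart, but it means your proposal is a roadmap that defers precisely the content of \cite{JL1} rather than a self-contained proof.
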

\begin{remark}
    The singular continuous part and localization part (with non-arithmetic condition for $\theta$) was first proved by Avila-You-Zhou in \cite{AYZ}.
\end{remark}

\begin{theorem}\cite{JL2}\label{thm:AMO_loc2}
    For $0<t<1$,
    \begin{itemize}
        \item for irrational $\alpha$ such that $\beta(\alpha)=0$ and $\theta$ such that $\gamma(\alpha,\theta)<\log t^{-2}$, any non-trivial generalized eigenfunction of $H^{\mathrm{AMO}}_{\alpha,\theta,t^{-2}}v=Ev$ decays exponentially, in particular $H^{\mathrm{AMO}}_{\alpha,\theta,t^{-2}}$ exhibits Anderson localization.
        \item for irrational $\alpha$ such that $\beta(\alpha)=0$ and $\theta$ such that $\gamma(\alpha,\theta)>\log t^{-2}$, 
        $H^{\mathrm{AMO}}_{\alpha,\theta,t^{-2}}$ has purely singular continuous spectrum.
    \end{itemize}
\end{theorem}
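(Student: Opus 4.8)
The plan is to prove the two halves of the dichotomy separately, both organized around the single identity $L(E)=\log t^{-2}$ for the Lyapunov exponent of the $\mathrm{SL}(2,\R)$ transfer-matrix cocycle of $H^{\mathrm{AMO}}_{\alpha,\theta,t^{-2}}$, valid at every $E$ in the spectrum when $\lambda:=t^{-2}>1$ (Herman's subharmonicity lower bound together with the exact computation for the almost Mathieu cocycle). Writing $A_k(\theta)$ for the $k$-step transfer matrix, the first input I would establish is a large deviation theorem: for each $\e>0$, $\big|\tfrac1k\log\|A_k(\theta)\|-L\big|<\e$ outside a set of $\theta$ of exponentially small measure, in the sharp form where, because $\beta(\alpha)=0$, the sub-exponential corrections coming from frequency returns $\|n\alpha\|_\T$ are genuinely $o(k)$ and do not interfere. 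Equivalently, the finite-volume determinants $P_k(\theta)$ of $(E-H^{\mathrm{AMO}}_{\alpha,\theta,t^{-2}})$ restricted to $[0,k-1]$ grow at rate $L$ for a.e.\ $\theta$, and the \emph{only} mechanism that produces anomalously small determinants, hence singular sites, is a phase resonance $\|2\theta+n\alpha\|_\T\approx e^{-\gamma|n|}$ — precisely the quantity measured by $\gamma(\alpha,\theta)$. This reduces the whole theorem to a competition between the decay rate $L=\log t^{-2}$ and the resonance rate $\gamma$.

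For the localization half ($\gamma(\alpha,\theta)<\log t^{-2}=L$), I would take a polynomially bounded generalized eigenfunction $u$ of $H^{\mathrm{AMO}}u=Eu$ (available for spectrally a.e.\ $E$ by Schnol's theorem) and run the block-resolvent / Poisson-formula machinery: for $x$ in an interval $I=[a,b]$ one has, up to signs, $u_x=-G_I(x,a)\,u_{a-1}-G_I(x,b)\,u_{b+1}$, where on a ``regular'' interval $|G_I(x,y)|\le e^{-L|x-y|+o(|I|)}$. Declaring $x$ regular when it sits well inside such an interval, the task is to show every large $x$ is regular, so that iterating the Poisson formula drives $u_x\to 0$ at rate $L$. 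The irregular intervals are exactly those on which $P_{|I|}$ is anomalously small, and the arithmetic content is to locate and separate them: since $\beta(\alpha)=0$ excludes close frequency returns, the only obstructions are the phase resonances at positions $n$ with $\|2\theta+n\alpha\|_\T\gtrsim e^{-\gamma'|n|}$ for $\gamma'$ slightly above $\gamma$, and these are sparse and well separated. Because $\gamma<L$, the decay strictly dominates, and a careful telescoping of the Poisson expansion across consecutive resonant blocks yields $|u_x|\lesssim e^{-(L-\gamma-\e)|x|}$, giving exponential decay of every generalized eigenfunction and hence Anderson localization.

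For the singular continuous half ($\gamma(\alpha,\theta)>L$), I would show there are no eigenvalues; combined with the absence of absolutely continuous spectrum (Kotani/Ishii–Pastur theory, since $L>0$ throughout the spectrum), this forces purely singular continuous spectrum. The no-eigenvalue step exploits the evenness of the cosine potential through a reflection argument of Gordon and Jitomirskaya–Simon type: if $n_j\to\infty$ are the phase resonances with $\|2\theta+n_j\alpha\|_\T\le e^{-(\gamma-\e)n_j}$, then the reflection $w_n:=u_{n_j-n}$ of a putative $\ell^2$ eigenfunction solves an equation agreeing with $H^{\mathrm{AMO}}u=Eu$ up to an error controlled by the resonance, so $u_{n_j-n}\approx u_n$. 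This symmetrization forces a revival of $u$ near $n_j$ whose size, relative to the decay $e^{-Ln_j}$ predicted by the Lyapunov exponent, is governed by the ratio of $\gamma$ to $L$; when $\gamma>L$ the revivals are too large to coexist with $\ell^2$-summability and occur for infinitely many $j$, a contradiction. Hence no eigenvalues, and the spectrum in $\Sigma_{\alpha,t}$ is purely singular continuous.

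The main obstacle, as in the original Jitomirskaya–Liu analysis, is the localization half at the \emph{sharp} threshold: one must charge each resonant block a loss matched exactly to $\gamma$ rather than to some $\gamma+\e$ (which would only give localization for $\gamma<L-\e$). This requires precise ``no double resonance'' bookkeeping — ruling out simultaneous smallness of $\|2\theta+n\alpha\|_\T$ and $\|(m-n)\alpha\|_\T$ using $\beta(\alpha)=0$ — together with a version of the large deviation estimate uniform in $E$ so that the exceptional energies are negligible for the spectral measure. Keeping the $o(k)$ errors uniform and summable along the telescoped Poisson expansion, and handling the finitely many boundary energies where the cocycle estimates degenerate, is where the bulk of the technical work lies.
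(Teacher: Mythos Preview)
The paper does not prove this theorem at all: it is stated as a quotation of the result from \cite{JL2} (Jitomirskaya--Liu) and is used as a black box in the proof of Theorem~\ref{thm:spectral_dec}. There is therefore no ``paper's own proof'' to compare your proposal against.

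That said, your outline is a faithful high-level sketch of the Jitomirskaya--Liu strategy: the identification $L(E)=\log t^{-2}$ on the spectrum, large deviation estimates for the transfer cocycle under $\beta(\alpha)=0$, the Poisson/Green function expansion with the singular intervals pinned to phase resonances $\|2\theta+n\alpha\|_\T$, and the Gordon--Jitomirskaya--Simon reflection to exclude $\ell^2$ eigenfunctions when $\gamma>L$. Your identification of the main difficulty --- obtaining the \emph{sharp} threshold rather than $\gamma<L-\e$ via careful single-resonance bookkeeping and uniform $o(k)$ control --- is also accurate. If your intention was to reproduce the cited proof rather than the present paper's argument, you are on the right track; just be aware that nothing in the paper under review carries out any of these steps.
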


Purely absolutely continuous spectrum was proved by Avila throughout the entire $t>1$ regime.
\begin{theorem}\cite{A_ac}\label{thm:AMO_ac}
    For $t>1$, and any irrational $\alpha$ and any $\theta\in \T$, the spectral measures of $H^{\mathrm{AMO}}_{\alpha,\theta,t^{-2}}$ are absolutely continuous.
\end{theorem}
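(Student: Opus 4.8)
The plan is to establish absolute continuity in the \emph{subcritical coupling} regime --- in the normalization of the excerpt, $t>1$ corresponds to coupling constant $\lambda := t^{-2}\in(0,1)$ --- by means of Avila's global theory of one-frequency analytic cocycles. The governing dynamical object is the Schr\"odinger cocycle $(\alpha, A_E)$ over the irrational rotation $\theta\mapsto\theta+\alpha$ on $\T$, where
\begin{align}
A_E(\theta)=\begin{pmatrix} E-2\lambda\cos 2\pi\theta & -1\\ 1 & 0\end{pmatrix}\in \mathrm{SL}(2,\R),
\end{align}
whose iterates are the transfer matrices of $H^{\mathrm{AMO}}_{\alpha,\theta,t^{-2}}$. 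The overall strategy is the dichotomy at the heart of the global theory: I will show that for $E$ in the spectrum the cocycle is \emph{subcritical}, that subcriticality forces \emph{almost reducibility}, and that almost reducibility yields purely absolutely continuous spectral measures for every $\theta\in\T$.

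First I would complexify the phase, replacing $\theta$ by $\theta+i\epsilon$, and study the Lyapunov exponent $L(\epsilon)$ of the resulting $\mathrm{SL}(2,\C)$ cocycle. The structural input from the global theory is that $\epsilon\mapsto L(\epsilon)$ is convex, even, and piecewise affine, and that its right slopes are quantized as integer multiples of $2\pi$ (the quantization of the \emph{acceleration}). For the almost Mathieu cocycle this permits an \emph{exact} evaluation: examining the $\epsilon\to+\infty$ asymptotics, where $2\lambda\cos 2\pi(\theta+i\epsilon)\sim \lambda e^{2\pi\epsilon}e^{-2\pi i\theta}$ dominates the transfer matrix, gives $L(\epsilon)=2\pi\epsilon+\log\lambda$ for large $\epsilon$; combined with $L(0)=0$ on the spectrum (the Herman/Bourgain--Jitomirskaya value $\max\{0,\log\lambda\}$ for $\lambda\le 1$), convexity and integer quantization pin down
\begin{align}
L(\epsilon)=\max\{0,\ 2\pi\epsilon+\log\lambda\},\qquad \epsilon\ge 0,
\end{align}
for every $E$ in the spectrum. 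Since $\log\lambda<0$, the exponent vanishes identically on the strip $0\le\epsilon<\tfrac{1}{2\pi}\log\lambda^{-1}$, which is precisely the statement that the cocycle is subcritical on $\sigma(H^{\mathrm{AMO}}_{\alpha,\theta,t^{-2}})$.

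The remaining two steps are where the real difficulty lies. For the second step I would invoke the Almost Reducibility Conjecture: every subcritical analytic one-frequency $\mathrm{SL}(2,\R)$ cocycle is almost reducible, i.e. can be conjugated, by analytic transformations on strips of shrinking but positive width, arbitrarily close to a constant cocycle. For the almost Mathieu operator this can be routed through Aubry duality: the subcritical operator is dual to the supercritical operator at coupling $\lambda^{-1}>1$, and the \emph{almost localization} of the dual (in the sense of Avila--Jitomirskaya) transcribes, under the duality, into almost reducibility of the original cocycle --- crucially, uniformly in $\theta$ and for all irrational $\alpha$. In the third step, almost reducibility is converted into absolute continuity: it supplies, for energies in the spectrum, approximate Floquet/Bloch solutions and polynomial control of transfer-matrix growth, so that Gilbert--Pearson subordinacy theory together with Kotani theory excludes point and singular continuous spectrum, and in fact bounds the spectral measures by Lebesgue measure, giving pure absolute continuity for every phase. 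The main obstacle is unquestionably the almost reducibility step: it rests on a delicate analytic KAM/renormalization scheme that must control the loss of analyticity width across infinitely many steps and tame the small divisors governed by the arithmetic of $\alpha$ (the quantity $\beta(\alpha)$), and it is here that the duality with almost localization must be deployed to reach \emph{all} $\theta$ rather than merely a.e.\ $\theta$. By comparison the subcriticality computation is essentially explicit, and the passage from almost reducibility to absolute continuity, though nontrivial, is by now standard.
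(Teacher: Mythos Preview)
The paper does not supply its own proof of this theorem: it is quoted as a known result of Avila \cite{A_ac} in the preliminaries section (Sec.~\ref{sec:AMO}) and used as a black box in the proof of Theorem~\ref{thm:spectral_dec}. There is therefore no in-paper argument to compare your proposal against.

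That said, your outline is a faithful sketch of the ideas behind Avila's proof. The three-step architecture you describe --- (i) subcriticality on the spectrum via the exact Lyapunov computation $L(\epsilon)=\max\{0,2\pi|\epsilon|+\log\lambda\}$ coming from quantized acceleration, (ii) passage from subcriticality to almost reducibility, and (iii) from almost reducibility to purely absolutely continuous spectral measures for \emph{every} phase --- is the correct skeleton. Your identification of step (ii) as the crux is accurate, and your remark that for the almost Mathieu operator one can route it through Aubry duality and almost localization of the dual (supercritical) operator, rather than through the general Almost Reducibility Conjecture, matches precisely what is done in \cite{A_ac}; indeed, that paper predates the general ARC and relies on this duality-based shortcut specific to the AMO. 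One small historical caveat: the clean subcriticality computation via acceleration quantization belongs to Avila's later global theory; in \cite{A_ac} the relevant dynamical input is obtained more directly from the dual almost localization estimates. But as a conceptual summary your proposal is sound.
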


For $t=1$, Avila-Jitomirskaya-Marx (for non $\alpha$-rational $\theta$, those $2\theta\notin \Z\alpha+\Z$) and Jitomirskaya (for $\alpha$-rational $\theta$, those $2\theta\in \Z\alpha+\Z$) proved purely singular continuous spectrum.
\begin{theorem}\cite{AJM,J_sc}\label{thm:AMO_critical}
    For $t=1$, any irrational $\alpha$ and any $\theta\in \T$, $H^{\mathrm{AMO}}_{\alpha,\theta,t^{-2}}$ has purely singular continuous spectrum.
\end{theorem}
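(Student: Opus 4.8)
The plan is to prove the two defining properties of purely singular continuous spectrum separately and uniformly in the phase: first that $H^{\mathrm{AMO}}_{\alpha,\theta,1}$ has no absolutely continuous spectrum, and then that it has no eigenvalues, for every irrational $\alpha$ and every $\theta\in\T$. The absence of absolutely continuous spectrum I would obtain immediately from Theorem \ref{thm:critical_AMO_mes=1}: since $\sigma(H^{\mathrm{AMO}}_{\alpha,\theta,1})$ is Lebesgue-null, and any absolutely continuous spectral measure is absolutely continuous with respect to Lebesgue measure while being supported inside the spectrum, such a measure must vanish. This step is soft and holds for all $\theta$ at once. The entire difficulty therefore concentrates in ruling out point spectrum, which I would split according to the arithmetic relation between $\theta$ and $\alpha$, matching the two cited sources.

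\textbf{Non-resonant phases} ($2\theta\notin\Z\alpha+\Z$). Here I would use Aubry--Andr\'e duality together with the fact that the coupling $t=1$ is the self-dual point. The soft version of the argument is standard: if the critical operator had point spectrum for a positive-measure set of phases, then by ergodicity it would be pure point for a.e.\ $\theta$, and duality would convert the localized eigenfunctions into extended Bloch-type states of the (self-dual) operator at the dual phases, producing absolutely continuous spectrum for a.e.\ dual phase and contradicting the first step. This already yields singular spectrum for a.e.\ $\theta$. The point of the refined argument in \cite{AJM} is to run this duality \emph{pointwise}: one shows directly that a hypothetical $\ell^2$ eigenfunction at a non-resonant $\theta$ dualizes to a bounded, non-decaying solution of the same critical equation at a genuinely $\alpha$-irrational dual phase, and the coexistence of a decaying solution and a bounded dual solution at the same energy is incompatible with self-duality at vanishing Lyapunov exponent.

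\textbf{Resonant phases} ($2\theta\in\Z\alpha+\Z$). Here the duality degenerates, and I would instead exploit the reflection symmetry available at these special phases: after an integer shift of the index, the potential $2\cos(2\pi(\theta+n\alpha))$ is symmetric under $n\mapsto -n$ (or $n\mapsto -n-1$), so the operator commutes with the corresponding reflection and any $\ell^2$ eigenfunction may be taken even or odd about the center of symmetry. One then forces a contradiction between this symmetry and the vanishing of the Lyapunov exponent at criticality: the transfer matrices grow only subexponentially, and the resulting balance between the two half-line behaviors, once the localization center is pinned to the symmetry center, excludes square-summability. This is the mechanism underlying Jitomirskaya's treatment \cite{J_sc} of $\alpha$-rational phases.

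The main obstacle, as the above makes clear, is not the absence of absolutely continuous spectrum but the \emph{uniformity in $\theta$} of the absence of eigenvalues. The measure-theoretic duality argument only excludes point spectrum for a.e.\ phase, and the exceptional set is precisely where the interesting arithmetic resonances live. Closing the gap for \emph{every} non-resonant $\theta$ requires the quantitative, phase-by-phase form of Aubry duality (via almost reducibility of the critical cocycle) rather than its soft measure-theoretic shadow, while the resonant phases must be removed by the independent symmetry argument; a further subtlety throughout is that at criticality the Lyapunov exponent vanishes, so eigenfunctions cannot decay exponentially and the duality must be carried out at the level of $L^2$ symbols rather than analytic ones.
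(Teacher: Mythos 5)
The paper does not actually prove this theorem: it is imported verbatim from the literature, with the phase dichotomy attributed exactly as you have it ($2\theta\notin\Z\alpha+\Z$ to \cite{AJM}, $2\theta\in\Z\alpha+\Z$ to \cite{J_sc}), and with absence of absolutely continuous spectrum following softly from the zero-measure result (Theorem \ref{thm:critical_AMO_mes=1}, i.e.\ \cite{La,AK}). So your decomposition --- no a.c.\ spectrum because an a.c.\ measure supported in a Lebesgue-null set vanishes, then exclusion of eigenvalues split by the arithmetic of $2\theta$ relative to $\Z\alpha+\Z$ --- is precisely the architecture of the cited proofs, and your first step is complete and correct as written, uniformly in $\theta$.

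Two points in your sketch of the hard steps are, however, genuinely wrong as stated. First, the parenthetical claim that the non-resonant case proceeds ``via almost reducibility of the critical cocycle'' is a contradiction in terms: in Avila's global theory, critical cocycles are by definition \emph{not} almost reducible (almost reducibility characterizes the subcritical regime), and the failure of that machinery at $t=1$ is exactly what makes the critical case delicate. The actual mechanism of \cite{AJM} runs in the opposite direction: a hypothetical $\ell^2$ eigenfunction at a non-resonant phase dualizes to a Bloch-type solution, non-resonance supplies a second independent solution, and the resulting reducibility of the (self-dual) cocycle at that energy forces zero acceleration, contradicting the positive acceleration (criticality) that holds throughout the spectrum at $t=1$. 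Your formulation --- that a decaying solution and a bounded dual solution are ``incompatible with self-duality at vanishing Lyapunov exponent'' --- is not by itself a contradiction: zero Lyapunov exponent coexists happily with bounded solutions, and the contradiction must be extracted from criticality, not from $L(E)=0$. Second, in the resonant case the reflection symmetry is indeed the correct starting point, but the inference you draw from it would fail: an even (or odd) $\ell^2$ eigenfunction decays symmetrically on both half-lines, which is perfectly consistent with subexponential transfer-matrix growth, so ``pinning the localization center to the symmetry center'' plus zero Lyapunov exponent excludes nothing by itself. The argument of \cite{J_sc} must again pass through duality --- at these self-dual phases the dual of a hypothetical eigenfunction is a solution of the same equation, and the contradiction is extracted from the interplay of the two (e.g.\ via Wronskian-type considerations), not from a growth balance. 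In short, your proposal is a faithful roadmap to \cite{AJM,J_sc} with the easy step done correctly, but the two mechanisms it offers for the hard steps are misstated and would not close the argument as written.
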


\section{Proofs of theorems}\label{sec:proof}
\subsection{The zero energy: Theorem \ref{thm:zero}}\label{sec:zero_proof}
\begin{proof}
    Note that $1+e^{2\pi i\theta}=0$ if $\theta=1/2$. 
    For any given $\theta$ and irrational $\alpha$, for any $k\in \N\setminus \{0\}$, by the denseness of the orbit $\{\theta+m\alpha\}_{m\in \Z}$, there exists $m_k\in \N$ such that 
    \begin{align}
        |\mathcal{K}_{m_k}(\theta+m_k\alpha)|=|1+e^{2\pi i(\theta+m_k\alpha)}|<\frac{1}{k}.
    \end{align}
    Now we construct an approximate eigenfunction. Let $u$ be defined as
    \begin{align}
    \begin{cases}
        u_{m}^{A,B,C}=0, \text{ if } m\neq m_k\\
        u_{m_k}^A=u_{m_k}^C=0\\
        u_{m_k}^B=1
    \end{cases}
    \end{align}
    Clearly such $u$ satisfies $\|u\|_{\ell^2(\Z,\C^3)}=1$. One can check by explicit computations that
    \begin{align}
    \begin{cases}
        (H_{\alpha,t,\theta}u)_{m_k}^A=\mathcal{K}_{m_k}(\theta+m_k\alpha), \\
        (H_{\alpha,t,\theta}u)_{m_k}^{B,C}=0\\
        (H_{\alpha,t,\theta}u)_m^{A,B,C}=0, \text{ if } m\neq m_k
    \end{cases}
    \end{align}
    This implies
    \begin{align}
        \|H_{\alpha,t,\theta}u\|_{\ell^2(\Z,\C^3)}<\frac{1}{k}, \text{ while } \|u\|_{\ell^2(\Z,\C^3)}=1.
    \end{align}
    Therefore the Weyl's criterion yields $0\in \sigma(H_{\alpha,t,\theta})$ for any $\theta$.
\end{proof}

\subsection{Reduction to the almost Mathieu operator}\label{sec:reduction}
\

To see the connection to AMO, one can view $H_{\alpha,t,\theta}$ as follows:
\begin{align}
    H_{\alpha,t,\theta}=\left(\begin{matrix}
        0 &\tilde{H}_{\alpha,t,\theta}\\
        \tilde{H}_{\alpha,t,\theta}^* &0
    \end{matrix}\right),
\end{align}
in which $\tilde{H}_{\alpha,t,\theta}$: $\ell^2(\Z,\C^2)\to \ell^2(\Z,\C)$ is defined as 
\begin{align}
    (\tilde{H}_{\alpha,t,\theta}u)_m=e^{\pi im\alpha}(1+e^{-2\pi i(\theta+m\alpha)})u_m^B+t(u_m^C+u_{m-1}^C),
\end{align}
and $\tilde{H}_{\alpha,t,\theta}^*: \ell^2(\Z,\C)\to \ell^2(\Z,\C^2)$ is:
\begin{align}
    (\tilde{H}_{\alpha,t,\theta}^*u)_m^B=&e^{-\pi im\alpha}(1+e^{2\pi i(\theta+m\alpha)})u_m\\
    (\tilde{H}_{\alpha,t,\theta}^*u)_m^C=&t(u_m+u_{m+1}).
\end{align}
By squaring the Hamiltonian $H_{\alpha,t,\theta}$, one arrives at
\begin{align}\label{eq:H2=diagonal}
    H_{\alpha,t,\theta}^2=
    \left(\begin{matrix}
        \tilde{H}_{\alpha,t,\theta}\tilde{H}^*_{\alpha,t,\theta} & 0\\
        0 & \tilde{H}_{\alpha,t,\theta}^*\tilde{H}_{\alpha,t,\theta}
    \end{matrix}\right).
\end{align}
A simple computation shows that $\tilde{H}_{\alpha,t,\theta}\tilde{H}^*_{\alpha,t,\theta}$ acting on $\ell^2(\Z)$ is:
\begin{align}\label{eq:HH*=HAMO}
    (\tilde{H}_{\alpha,t,\theta}\tilde{H}^*_{\alpha,t,\theta}u)_m
    =&\mathcal{K}_m(\theta+m\alpha)(\tilde{H}_{\alpha,t,\theta}^*u)_m^B+t((\tilde{H}_{\alpha,t,\theta}^*u)_m^C+(\tilde{H}_{\alpha,t,\theta}^*u)_{m-1}^C)\notag\\
    =&|\mathcal{K}_m(\theta+m\alpha)|^2 u_m+t^2(u_{m+1}+2u_m+u_{m-1})\notag\\
    =&t^2(u_{m+1}+u_{m-1}+2u_m +2t^{-2}(1+\cos(2\pi(\theta+m\alpha)))u_m)\notag\\
    =&t^2((H^{\mathrm{AMO}}_{\alpha,\theta,t^{-2}}u)_m+(2+2t^{-2})u_m),
\end{align}
in which $\mathcal{K}_m$ is as in \eqref{def:Km}.

It is well-known that
\begin{align}\label{eq:uni_equiv}
    (\tilde{H}_{\alpha,t,\theta}\tilde{H}_{\alpha,t,\theta}^*)|_{(\mathrm{ker}(\tilde{H}_{\alpha,t,\theta}^*))^\perp} \text{ and } (\tilde{H}_{\alpha,t,\theta}^*\tilde{H}_{\alpha,t,\theta})|_{(\mathrm{ker}(\tilde{H}_{\alpha,t,\theta}))^\perp} \text{ are unitarily equivalent}.
\end{align}
One can show that 
\begin{align}\label{eq:ker=empty}
\mathrm{ker}(\tilde{H}_{\alpha,t,\theta}^*)=\emptyset.
\end{align}
In fact suppose $\tilde{H}_{\alpha,t,\theta}^*u=0$, $\|u\|_{\ell^2(\Z)}=1$, then 
\begin{align}
    t(u_m+u_{m+1})=0, \text{ for any } m\in \Z.
\end{align}
This implies $u=0$ since otherwise $\|u\|_{\ell^2(\Z)}=\infty$.

Combining \eqref{eq:uni_equiv} with \eqref{eq:ker=empty}, we have
\begin{align}\label{eq:uni_equiv_no_ker}
    \tilde{H}_{\alpha,t,\theta}\tilde{H}_{\alpha,t,\theta}^* \text{ and } (\tilde{H}_{\alpha,t,\theta}^*\tilde{H}_{\alpha,t,\theta})|_{(\mathrm{ker}(\tilde{H}_{\alpha,t,\theta}))^\perp} \text{ are unitarily equivalent}.
\end{align}
This implies
\begin{align}\label{eq:sig_HH*=H*H_no_zero}
    \sigma(\tilde{H}_{\alpha,t,\theta}\tilde{H}_{\alpha,t,\theta}^*)                                                                                                                                                                                                                                                                                                                =\sigma(\tilde{H}_{\alpha,t,\theta}^*\tilde{H}_{\alpha,t,\theta})\setminus \{0\}.
\end{align}
It is clear from \eqref{eq:HH*=HAMO} that
\begin{lemma}\label{lem:Lieb_to_AMO}
    \begin{align}
\sigma(\tilde{H}_{\alpha,t,\theta}\tilde{H}_{\alpha,t,\theta}^*)=t^2\cdot \sigma(H^{\mathrm{AMO}}_{\alpha,\theta,t^{-2}})+2+2t^2,
    \end{align}
in which $a\cdot \mathcal{B}+b:=\{ax+b, x\in \mathcal{B}\}$ for a set $\mathcal{B}$.
\end{lemma}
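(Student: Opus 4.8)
The plan is to read the computation \eqref{eq:HH*=HAMO} as an identity of bounded operators on $\ell^2(\Z)$ rather than as a merely pointwise statement, and then to invoke the elementary spectral mapping property for affine images of a bounded operator.

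First I would observe that the chain of equalities in \eqref{eq:HH*=HAMO} holds for every $u\in\ell^2(\Z)$ and every $m\in\Z$, and that every operator appearing there is bounded on $\ell^2(\Z)$: the diagonal potential $|\mathcal{K}_m(\theta+m\alpha)|^2=|1+e^{-2\pi i(\theta+m\alpha)}|^2$ is uniformly bounded (by $4$), and the remaining terms are finite-range with bounded coefficients. Hence the computation encodes the operator identity
\begin{align}\label{eq:op_identity}
\tilde{H}_{\alpha,t,\theta}\tilde{H}^*_{\alpha,t,\theta}=t^2\,H^{\mathrm{AMO}}_{\alpha,\theta,t^{-2}}+(2+2t^2)\,I,
\end{align}
where I have rewritten the scalar constant as $t^2(2+2t^{-2})=2+2t^2$.

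Next, I would set $p(x):=t^2 x+(2+2t^2)$, so that the right-hand side of \eqref{eq:op_identity} is exactly $p(H^{\mathrm{AMO}}_{\alpha,\theta,t^{-2}})$. For any bounded operator $T$ and any $a\neq 0$ one has $aT+bI-\lambda I=a\bigl(T-\tfrac{\lambda-b}{a}I\bigr)$, so the left-hand side fails to be invertible precisely when $\tfrac{\lambda-b}{a}\in\sigma(T)$; that is, $\sigma(aT+bI)=a\,\sigma(T)+b$. Applying this with $T=H^{\mathrm{AMO}}_{\alpha,\theta,t^{-2}}$, $a=t^2\neq 0$, and $b=2+2t^2$ gives
\begin{align}
\sigma(\tilde{H}_{\alpha,t,\theta}\tilde{H}^*_{\alpha,t,\theta})=t^2\cdot\sigma(H^{\mathrm{AMO}}_{\alpha,\theta,t^{-2}})+2+2t^2,
\end{align}
which is exactly the assertion of the lemma.

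Because the core computation \eqref{eq:HH*=HAMO} is already carried out in the excerpt, I expect no substantive obstacle here. The only points requiring care are purely bookkeeping: confirming the arithmetic $t^2(2+2t^{-2})=2+2t^2$ for the affine shift, and being explicit that the pointwise identity \eqref{eq:HH*=HAMO} is in fact a bounded-operator identity, which is what lets the spectral mapping step apply verbatim.
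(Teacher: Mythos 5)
Your proposal is correct and is essentially the paper's own argument: the paper derives Lemma \ref{lem:Lieb_to_AMO} immediately from the computation \eqref{eq:HH*=HAMO}, read exactly as you do, namely as the bounded-operator identity $\tilde{H}_{\alpha,t,\theta}\tilde{H}^*_{\alpha,t,\theta}=t^2 H^{\mathrm{AMO}}_{\alpha,\theta,t^{-2}}+(2+2t^2)I$ followed by the affine spectral mapping $\sigma(aT+bI)=a\,\sigma(T)+b$. Your explicit checks (boundedness of the diagonal and finite-range terms, and the arithmetic $t^2(2+2t^{-2})=2+2t^2$) simply spell out what the paper dismisses with ``It is clear from \eqref{eq:HH*=HAMO}.''
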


The following is well-known: 
\begin{theorem}\label{thm:AMO_gap_boundary}
There exists some positive constant $c_{\alpha,t}>0$ such that for any $\alpha\notin \Z$ and any $\theta\in \T$:
    \begin{align}
    \sigma(H^{\mathrm{AMO}}_{\alpha,\theta,t^{-2}})\subseteq \left[-2-2t^{-2}+c_{\alpha,t}, 2+2t^{-2}-c_{\alpha,t}\right].
\end{align}
\end{theorem}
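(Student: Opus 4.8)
The plan is to bound the numerical range of $H:=H^{\mathrm{AMO}}_{\alpha,\theta,t^{-2}}$ away from its two extreme values $\pm(2+2t^{-2})$, uniformly in $\theta$. Since $H$ is bounded and self-adjoint, $\sup\sigma(H)=\sup_{\|u\|=1}\langle Hu,u\rangle$ and $\inf\sigma(H)=\inf_{\|u\|=1}\langle Hu,u\rangle$, so it suffices to produce a $\theta$-independent gap for the quadratic form. Writing $\lambda=t^{-2}$, the starting point is the exact sum-of-squares identity, valid for every $u\in\ell^2(\Z)$,
\[
(2+2\lambda)\|u\|^2-\langle Hu,u\rangle=\sum_n|u_{n+1}-u_n|^2+2\lambda\sum_n\bigl(1-\cos(2\pi(\theta+n\alpha))\bigr)|u_n|^2,
\]
obtained by completing the square in the hopping term ($2\|u\|^2-2\,\mathrm{Re}\sum u_{n+1}\bar u_n=\sum|u_{n+1}-u_n|^2$) and in the potential term. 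Both summands are non-negative, so $\sup\sigma(H)\le 2+2\lambda-c$ is equivalent to a uniform positive lower bound $c$ for the right-hand side over unit vectors; the lower edge is handled by the mirror identity, where $|u_{n+1}+u_n|^2$ and $1+\cos$ replace $|u_{n+1}-u_n|^2$ and $1-\cos$.

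First I would use the hypothesis $\alpha\notin\Z$, which gives $\|\alpha\|_{\T}=\mathrm{dist}(\alpha,\Z)>0$. Fix $\delta$ with $0<\delta<\|\alpha\|_{\T}/2$ and set $G:=\{n:\|\theta+n\alpha\|_{\T}<\delta\}$, the set of sites where the weight $1-\cos$ is small. The key combinatorial observation is that $G$ contains no two consecutive integers: if $n,n+1\in G$ then $\|\alpha\|_{\T}\le\|\theta+(n+1)\alpha\|_{\T}+\|\theta+n\alpha\|_{\T}<2\delta<\|\alpha\|_{\T}$, a contradiction. Hence every element of $G$ is isolated, both its neighbors lie in $G^c$, and $n\mapsto n+1$ maps $G$ injectively into $G^c$. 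On $G^c$ one has the $\theta$-independent bound $1-\cos(2\pi(\theta+n\alpha))\ge 1-\cos(2\pi\delta)=:\kappa>0$.

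The remaining step couples the two energies. Let $P:=\sum_{n\in G}|u_n|^2$, so the potential term is at least $2\lambda\kappa(1-P)$. Applying $|u_n|^2\le 2|u_{n+1}-u_n|^2+2|u_{n+1}|^2$ for each $n\in G$, summing over $G$, and using the injection $n\mapsto n+1\in G^c$, I obtain $P\le 2A+2(1-P)$, where $A:=\sum_n|u_{n+1}-u_n|^2$ is the full Dirichlet energy; equivalently $A\ge(3P-2)/2$. Minimizing the resulting lower bound $A+2\lambda\kappa(1-P)$ over $P\in[0,1]$ (a one-variable piecewise-linear optimization with minimal $A=\max\{0,(3P-2)/2\}$) yields the $\theta$-independent constant $c_{\alpha,t}=\min\{2\lambda\kappa/3,\,1/2\}>0$. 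The identical argument applied to the mirror identity, with $G':=\{n:\|\theta+n\alpha-\tfrac12\|_{\T}<\delta\}$, controls the lower edge, so $\sigma(H)\subseteq[-2-2\lambda+c_{\alpha,t},\,2+2\lambda-c_{\alpha,t}]$.

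I expect the only genuine content to be the combinatorial isolation of $G$, which is precisely where $\alpha\notin\Z$ enters and which rules out the scenario (possible only for integer flux) in which the potential sits near its maximum along a long block of consecutive sites. The mechanism is the tension between the potential term, which wants $u$ concentrated on the sparse set $G$, and the Dirichlet energy, which penalizes the oscillation such concentration forces; quantifying this tension is what produces the gap. Everything else is elementary summation by parts and the linear optimization, and the constant is manifestly independent of $\theta$, as required.
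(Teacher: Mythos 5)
Your proof is correct, but it takes a genuinely different route from the paper's. The paper squares the operator and bounds the operator norm of $(H^{\mathrm{AMO}}_{\alpha,\theta,t^{-2}})^2$ by the triangle inequality: the off-diagonal coefficients of the squared operator involve $\cos(2\pi(\theta+m\alpha))+\cos(2\pi(\theta+(m+1)\alpha))=2\cos(\pi\alpha)\cos(2\pi(\theta+(m+\tfrac12)\alpha))$, whose supremum $2|\cos(\pi\alpha)|$ is uniformly below $2$ precisely because $\alpha\notin\Z$; this yields $\|(H^{\mathrm{AMO}})^2\|\leq(2+2t^{-2})^2-c'_{\alpha,t}$ in three lines and handles both spectral edges simultaneously. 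You instead bound the quadratic form directly via the sum-of-squares identity, isolate the near-resonant sites $G$ combinatorially, and run a variational tradeoff between the Dirichlet energy and the potential term — all of which I checked and which is sound, including the injection $n\mapsto n+1$ from $G$ into $G^c$, the inequality $A\geq(3P-2)/2$, and the piecewise-linear minimization giving $c_{\alpha,t}=\min\{2t^{-2}\kappa/3,\,1/2\}$. Note that both proofs ultimately rest on the same arithmetic input (the phases at consecutive sites differ by $\alpha$, so the potential cannot be extremal on two adjacent sites); the paper's squaring argument packages this far more economically, while your form-bound is fully explicit, decouples the two edges (which would matter if the potential lacked the symmetry $\theta\mapsto\theta+\tfrac12$), and avoids computing $H^2$ at all. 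One small economy you missed: the lower edge follows from the upper edge for free, since conjugation by $u_n\mapsto(-1)^nu_n$ unitarily maps $H^{\mathrm{AMO}}_{\alpha,\theta,t^{-2}}$ to $-H^{\mathrm{AMO}}_{\alpha,\theta+1/2,t^{-2}}$ and your bound is uniform in $\theta$, so the mirror identity for $G'$ is not needed.
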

We include a short proof below for completeness.
\begin{proof}
    It is enough to show that the operator norm 
\begin{align}\label{eq:norm_H_AMO^2}
    \|(H^{\mathrm{AMO}}_{\alpha,\theta,t^{-2}})^2\|\leq (2+2t^{-2})^2-c_{\alpha,t}',
\end{align}
for some positive constant $c_{\alpha,t}'$.
A simple computation and using triangle inequality yields
\begin{align}
    \|(H^{\mathrm{AMO}}_{\alpha,\theta,t^{-2}})^2\|
    \leq &4+4t^{-4}+4t^{-2}\sup_{m\in \Z}(\cos(2\pi(\theta+m\alpha))+\cos(2\pi(\theta+(m+1)\alpha))\\
    \leq &4+4t^{-2}+4t^{-2}(2-c_{\alpha}),
\end{align}
for some positive constant $c_{\alpha}>0$ due to $\alpha\notin \Z$. This clearly yields \eqref{eq:norm_H_AMO^2}, hence the claimed result.
\end{proof}

Theorem \ref{thm:AMO_gap_boundary} together with Lemma \ref{lem:Lieb_to_AMO} implies $\bigcup_{\theta\in \T}\sigma(\tilde{H}_{\alpha,t,\theta}\tilde{H}_{\alpha,t,\theta}^*)$ is {\it bounded away} from $0$, which is crucial for the rest of the paper.
\begin{proposition}\label{prop:away_from_zero}
There exists constants $C_{1,\alpha,t}>C_{2,\alpha,t}>0$ such that
\begin{align}
    \bigcup_{\theta\in \T}\sigma(\tilde{H}_{\alpha,t,\theta}\tilde{H}_{\alpha,t,\theta}^*)\subset [C_{2,\alpha,t}, C_{1,\alpha,t}]\subset (0,\infty).
\end{align}
\end{proposition}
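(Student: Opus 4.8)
The plan is to obtain the proposition as a direct consequence of Lemma~\ref{lem:Lieb_to_AMO} and Theorem~\ref{thm:AMO_gap_boundary}, by pushing the uniform-in-$\theta$ spectral bound for the AMO through the affine map $x\mapsto t^2x+2+2t^2$. Since $\tilde H_{\alpha,t,\theta}\tilde H_{\alpha,t,\theta}^*$ is a nonnegative operator, its spectrum already lies in $[0,\infty)$; the genuine content of the statement is the strict separation from $0$, and this is exactly what the gap constant $c_{\alpha,t}>0$ in Theorem~\ref{thm:AMO_gap_boundary} supplies.

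First I would record, from Lemma~\ref{lem:Lieb_to_AMO}, that for each $\theta$
\[
\sigma(\tilde H_{\alpha,t,\theta}\tilde H_{\alpha,t,\theta}^*)=t^2\cdot\sigma(H^{\mathrm{AMO}}_{\alpha,\theta,t^{-2}})+2+2t^2 .
\]
Theorem~\ref{thm:AMO_gap_boundary} bounds the AMO spectrum, uniformly in $\theta$, by $[-2-2t^{-2}+c_{\alpha,t},\,2+2t^{-2}-c_{\alpha,t}]$. Applying the increasing affine map (note $t\neq0$ gives $t^2>0$) to these endpoints and simplifying produces an image interval with lower endpoint
\[
t^2(-2-2t^{-2}+c_{\alpha,t})+2+2t^2=t^2c_{\alpha,t}
\]
and upper endpoint $4+4t^2-t^2c_{\alpha,t}$. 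Hence I would set $C_{2,\alpha,t}:=t^2c_{\alpha,t}$ and $C_{1,\alpha,t}:=4+4t^2-t^2c_{\alpha,t}$.

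Because the constant $c_{\alpha,t}$ in Theorem~\ref{thm:AMO_gap_boundary} is independent of $\theta$, the same two-sided bound holds for every $\theta$, so taking the union over $\theta\in\T$ preserves the containment in the single interval $[C_{2,\alpha,t},C_{1,\alpha,t}]$. It then remains only to verify $C_{1,\alpha,t}>C_{2,\alpha,t}>0$: positivity of $C_{2,\alpha,t}$ is immediate from $t\neq0$ and $c_{\alpha,t}>0$, while $C_{1,\alpha,t}>C_{2,\alpha,t}$ reduces to $c_{\alpha,t}<2+2t^{-2}$, which holds because the bounding interval in Theorem~\ref{thm:AMO_gap_boundary} must be nondegenerate (it contains the infinite set $\sigma(H^{\mathrm{AMO}}_{\alpha,\theta,t^{-2}})$). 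I expect no real obstacle beyond this bookkeeping: all the analytic work is already encapsulated in Theorem~\ref{thm:AMO_gap_boundary}, whose strict gap constant is the single ingredient making the lower endpoint $t^2c_{\alpha,t}$ strictly positive, and Lemma~\ref{lem:Lieb_to_AMO} merely transports it.
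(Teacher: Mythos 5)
Your proof is correct and follows exactly the paper's route: the paper derives Proposition~\ref{prop:away_from_zero} in one line by combining Lemma~\ref{lem:Lieb_to_AMO} with the uniform-in-$\theta$ gap bound of Theorem~\ref{thm:AMO_gap_boundary}, which is precisely your affine-pushforward argument. Your explicit constants $C_{2,\alpha,t}=t^2c_{\alpha,t}$ and $C_{1,\alpha,t}=4+4t^2-t^2c_{\alpha,t}$, and the check that $c_{\alpha,t}<2+2t^{-2}$, are just the bookkeeping the paper leaves implicit.
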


Combining Proposition \ref{prop:away_from_zero} with \eqref{eq:sig_HH*=H*H_no_zero}, we have
\begin{align}
    \sigma(\tilde{H}_{\alpha,t,\theta}^*\tilde{H}_{\alpha,t,\theta})\setminus \{0\}=\sigma(\tilde{H}_{\alpha,t,\theta}\tilde{H}_{\alpha,t,\theta}^*)\subset [C_{2,\alpha,t},C_{1,\alpha,t}]\subset (0,\infty).
\end{align}
This implies either $E=0$ is a point spectrum of $\tilde{H}^*_{\alpha,t,\theta}\tilde{H}_{\alpha,t,\theta}$ or not a spectrum.
By the same argument as in the proof of Theorem \ref{thm:zero}, one can show that $0\in\sigma(\tilde{H}^*_{\alpha,t,\theta}\tilde{H}_{\alpha,t,\theta})$.
Hence we have
\begin{lemma}\label{lem:sigH*H=sigHH*_0}
\begin{align}
    \sigma(\tilde{H}_{\alpha,t,\theta}^*\tilde{H}_{\alpha,t,\theta})=\sigma(\tilde{H}_{\alpha,t,\theta}\tilde{H}_{\alpha,t,\theta}^*)\cup \{0\}\subset [C_{2,\alpha,t},C_{1,\alpha,t}]\cup \{0\}.
\end{align}
Furthermore $E=0$ is a point spectrum of $\tilde{H}_{\alpha,t,\theta}^*\tilde{H}_{\alpha,t,\theta}$ for any $t\neq 0$, any $\alpha\notin\Z$ and any $\theta\in \T$.
\end{lemma}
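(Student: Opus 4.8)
The plan is to separate the set-theoretic identity from the point-spectrum claim. The identity is essentially already packaged by the unitary equivalence \eqref{eq:uni_equiv_no_ker}--\eqref{eq:sig_HH*=H*H_no_zero} together with Proposition~\ref{prop:away_from_zero}; the only genuinely new input is that $0\in\sigma(\tilde{H}_{\alpha,t,\theta}^*\tilde{H}_{\alpha,t,\theta})$. Granting this, the displayed identity preceding the lemma reads $\sigma(\tilde{H}_{\alpha,t,\theta}^*\tilde{H}_{\alpha,t,\theta})\setminus\{0\}=\sigma(\tilde{H}_{\alpha,t,\theta}\tilde{H}_{\alpha,t,\theta}^*)\subset[C_{2,\alpha,t},C_{1,\alpha,t}]$, so adjoining $\{0\}$ to both sides yields at once the equality $\sigma(\tilde{H}_{\alpha,t,\theta}^*\tilde{H}_{\alpha,t,\theta})=\sigma(\tilde{H}_{\alpha,t,\theta}\tilde{H}_{\alpha,t,\theta}^*)\cup\{0\}$ and the inclusion in $[C_{2,\alpha,t},C_{1,\alpha,t}]\cup\{0\}$.

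To place $0$ in the spectrum I would transcribe the construction from the proof of Theorem~\ref{thm:zero}. Because $\tilde{H}_{\alpha,t,\theta}^*\tilde{H}_{\alpha,t,\theta}\geq 0$, its spectral bottom equals $\inf_{\|u\|=1}\langle \tilde{H}_{\alpha,t,\theta}^*\tilde{H}_{\alpha,t,\theta}u,u\rangle=\inf_{\|u\|=1}\|\tilde{H}_{\alpha,t,\theta}u\|^2$, so it suffices to produce unit vectors on which $\tilde{H}_{\alpha,t,\theta}$ is arbitrarily small. Taking $u\in\ell^2(\Z,\C^2)$ supported at a single site $m_k$ with $u_{m_k}^B=1$ and $u^C\equiv 0$, one finds $\tilde{H}_{\alpha,t,\theta}u$ supported at $m_k$ with value $\mathcal{K}_{m_k}(\theta+m_k\alpha)$, so $\|\tilde{H}_{\alpha,t,\theta}u\|=|1+e^{-2\pi i(\theta+m_k\alpha)}|$ by \eqref{def:Km}. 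Since $\{\theta+m\alpha\}_{m\in\Z}$ is dense, I can choose $m_k$ with $\theta+m_k\alpha$ within $1/k$ of $1/2\pmod 1$, forcing this quantity below $1/k$; equivalently $\|\tilde{H}_{\alpha,t,\theta}^*\tilde{H}_{\alpha,t,\theta}u\|\leq\|\tilde{H}_{\alpha,t,\theta}\|\,|\mathcal{K}_{m_k}(\theta+m_k\alpha)|\to 0$ verifies Weyl's criterion directly. Hence $0\in\sigma(\tilde{H}_{\alpha,t,\theta}^*\tilde{H}_{\alpha,t,\theta})$.

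For the final claim I would use the inclusion just obtained: since $\sigma(\tilde{H}_{\alpha,t,\theta}^*\tilde{H}_{\alpha,t,\theta})\subset\{0\}\cup[C_{2,\alpha,t},C_{1,\alpha,t}]$ with $C_{2,\alpha,t}>0$, the point $0$ is \emph{isolated} in the spectrum. For a self-adjoint operator an isolated spectral point is an eigenvalue: picking $0<\e<C_{2,\alpha,t}$, the spectral projection satisfies $E(\{0\})=E((-\e,\e))\neq 0$ because $0\in\sigma$, and every vector in its range lies in $\ker(\tilde{H}_{\alpha,t,\theta}^*\tilde{H}_{\alpha,t,\theta})$ and is therefore a genuine $0$-eigenvector. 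This proves $E=0$ is point spectrum for all $t\neq 0$, $\alpha\notin\Z$, and $\theta\in\T$.

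The only step with real content is locating $0$ in the spectrum, and that is a direct copy of Theorem~\ref{thm:zero}, so I expect no serious obstacle. The point to treat with care is the implication from ``$0\in\sigma$'' to ``$0$ is an eigenvalue'': this is not automatic and relies essentially on the spectral gap $C_{2,\alpha,t}>0$ supplied by Proposition~\ref{prop:away_from_zero}, which must be cited explicitly rather than assumed.
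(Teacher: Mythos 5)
Your proposal is correct and takes essentially the same approach as the paper: the identity follows by adjoining $\{0\}$ to \eqref{eq:sig_HH*=H*H_no_zero} together with Proposition \ref{prop:away_from_zero}, the membership $0\in\sigma(\tilde{H}^*_{\alpha,t,\theta}\tilde{H}_{\alpha,t,\theta})$ is obtained by the same single-site test-vector construction the paper invokes from Theorem \ref{thm:zero}, and the point-spectrum claim comes from $0$ being an isolated spectral point (you merely spell out the spectral-projection argument that the paper leaves implicit). One cosmetic remark: choosing $\theta+m_k\alpha$ within $1/k$ of $1/2$ gives $|1+e^{-2\pi i(\theta+m_k\alpha)}|\leq 2\pi/k$ rather than $<1/k$, which is immaterial since only convergence to $0$ is needed.
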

Our next theorem shows the symmetry of $\sigma(H_{\alpha,t,\theta})$.
\begin{theorem}\label{thm:H=-H}
$H_{\alpha,t,\theta}$ is unitarily equivalent to $-H_{\alpha,t,\theta}$. As a consequence, we have
\begin{align}
\sigma(H_{\alpha,t,\theta})=-\sigma(H_{\alpha,t,\theta}).
\end{align}
\end{theorem}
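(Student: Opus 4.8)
The plan is to read off the claim from the bipartite (chiral) structure of the lattice, which is already exhibited in the off-diagonal block decomposition
\[
H_{\alpha,t,\theta}=\begin{pmatrix} 0 & \tilde{H}_{\alpha,t,\theta}\\ \tilde{H}_{\alpha,t,\theta}^* & 0\end{pmatrix}
\]
acting on $\ell^2(\Z,\C)\oplus \ell^2(\Z,\C^2)$, where the first summand carries the $A$-component and the second carries the $(B,C)$-components. For any bounded operator of this block-antidiagonal shape, conjugation by the grading involution $W:=\mathrm{diag}(I,-I)$ negates the operator. Concretely, I would define $W$ on $\ell^2(\Z,\C^3)$ as the sublattice sign flip $(WU)_m=(u_m^A,-u_m^B,-u_m^C)$, and observe that $W$ is unitary and self-inverse, $W=W^*=W^{-1}$.

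Next I would verify the single identity $W\,H_{\alpha,t,\theta}\,W^{-1}=-H_{\alpha,t,\theta}$. This is immediate in block form, since
\[
\begin{pmatrix} I & 0\\ 0 & -I\end{pmatrix}\begin{pmatrix} 0 & \tilde{H}_{\alpha,t,\theta}\\ \tilde{H}_{\alpha,t,\theta}^* & 0\end{pmatrix}\begin{pmatrix} I & 0\\ 0 & -I\end{pmatrix}=\begin{pmatrix} 0 & -\tilde{H}_{\alpha,t,\theta}\\ -\tilde{H}_{\alpha,t,\theta}^* & 0\end{pmatrix}.
\]
Equivalently, one checks it componentwise against the defining relations \eqref{def:H_alpha_theta}: the $A$-output depends only on the $B,C$-inputs, which $W$ flips while leaving the $A$-value fixed, so that component changes sign; the $B$- and $C$-outputs depend only on the $A$-input, which $W$ leaves fixed while flipping the outputs, so those components change sign as well. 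In each of the three components exactly one sign change occurs, yielding $W H_{\alpha,t,\theta} W^{-1}=-H_{\alpha,t,\theta}$.

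Finally, since $W$ is unitary, $H_{\alpha,t,\theta}$ and $-H_{\alpha,t,\theta}$ are unitarily equivalent, so $\sigma(H_{\alpha,t,\theta})=\sigma(W H_{\alpha,t,\theta} W^{-1})=\sigma(-H_{\alpha,t,\theta})=-\sigma(H_{\alpha,t,\theta})$. There is no real obstacle here: the symmetry is an immediate manifestation of the chiral symmetry of a bipartite nearest-neighbor hopping Hamiltonian, and the only point requiring a little care is to choose the grading so that it \emph{anticommutes} with $H_{\alpha,t,\theta}$ (flipping exactly one of the two color classes) rather than commuting with it; the verification itself is the one-line block computation above.
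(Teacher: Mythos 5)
Your proof is correct and takes essentially the same route as the paper: the paper conjugates by the diagonal unitary $U$ with $(Uu)_m^A=-u_m^A$ and $(Uu)_m^{B,C}=u_m^{B,C}$, which is exactly your grading involution up to a global sign ($W=-U$), so the two conjugations coincide. (Incidentally, the paper's displayed conclusion ``$U^*HU=H$'' is a typo for $U^*HU=-H$, which your componentwise sign count makes explicit.)
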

\begin{proof}
    Let $U$ be such that
    \begin{align}
        (Uu)_m^A=-u_m^A, \text{ and } (Um)_m^{B,C}=u_m^{B,C}, \text{ for any } m\in Z.
    \end{align}
    Clearly $U$ is unitary. It is also easy to check that $U^*HU=H$, hence the claimed result.
\end{proof}

Combining \eqref{eq:H2=diagonal} with Lemmas \ref{lem:Lieb_to_AMO}, \ref{lem:sigH*H=sigHH*_0}, Theorem \ref{thm:H=-H}, we have
\begin{theorem}\label{thm:sig_H}
    \begin{align}\label{eq:Sigma=sqrt}
        \sigma(H_{\alpha,t,\theta})
        =&\pm\sqrt{\sigma(\tilde{H}_{\alpha,t,\theta}\tilde{H}_{\alpha,t,\theta}^*)}\cup\{0\}\\
        =&\pm\sqrt{t^2\cdot \sigma(H^{\mathrm{AMO}}_{\alpha,\theta,t^{-2}})+2+2t^2}\cup \{0\}.
    \end{align}
    In particular, the bulk of spectrum $\Sigma_{\alpha,t}$, defined as in \eqref{def:bulk} satisfies $\mathrm{dist}(\Sigma_{\alpha,t},0)>0$. 
\end{theorem}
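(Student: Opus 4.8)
The plan is to recover $\sigma(H_{\alpha,t,\theta})$ from the spectrum of its square, which we have already diagonalized in \eqref{eq:H2=diagonal}. Throughout, write $\tilde{H}:=\tilde{H}_{\alpha,t,\theta}$. Note first that $H_{\alpha,t,\theta}$ is bounded and self-adjoint: it is the off-diagonal block operator $\left(\begin{smallmatrix} 0 & \tilde{H}\\ \tilde{H}^{*} & 0\end{smallmatrix}\right)$, which equals its own adjoint. Hence $\sigma(H_{\alpha,t,\theta})\subset\R$ is compact, and the continuous functional calculus furnishes the spectral mapping identity $\sigma(H_{\alpha,t,\theta}^{2})=\{\lambda^{2}:\lambda\in\sigma(H_{\alpha,t,\theta})\}$. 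Combining the block-diagonal form \eqref{eq:H2=diagonal} with Lemma \ref{lem:sigH*H=sigHH*_0}, I would compute
\[
\sigma(H_{\alpha,t,\theta}^{2})=\sigma(\tilde{H}\tilde{H}^{*})\cup\sigma(\tilde{H}^{*}\tilde{H})=\sigma(\tilde{H}\tilde{H}^{*})\cup\{0\}.
\]

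Next I would invert the squaring map. The spectral mapping identity by itself only guarantees that for each $\mu\in\sigma(H_{\alpha,t,\theta}^{2})$ at least one of $\pm\sqrt{\mu}$ lies in $\sigma(H_{\alpha,t,\theta})$; to obtain both signs I invoke the reflection symmetry $\sigma(H_{\alpha,t,\theta})=-\sigma(H_{\alpha,t,\theta})$ from Theorem \ref{thm:H=-H}. Together these give $\lambda\in\sigma(H_{\alpha,t,\theta})$ if and only if $\lambda^{2}\in\sigma(H_{\alpha,t,\theta}^{2})$, that is, $\sigma(H_{\alpha,t,\theta})=\pm\sqrt{\sigma(H_{\alpha,t,\theta}^{2})}$. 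Substituting the previous computation and using $\sqrt{0}=0$ yields the first displayed equality $\sigma(H_{\alpha,t,\theta})=\pm\sqrt{\sigma(\tilde{H}\tilde{H}^{*})}\cup\{0\}$; the second equality follows at once by inserting the affine identification of Lemma \ref{lem:Lieb_to_AMO}.

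For the ``in particular'' claim I would isolate the point $0$. Proposition \ref{prop:away_from_zero} places $\bigcup_{\theta}\sigma(\tilde{H}\tilde{H}^{*})$ inside $[C_{2,\alpha,t},C_{1,\alpha,t}]\subset(0,\infty)$, so taking square roots gives the uniform-in-$\theta$ inclusion $\pm\sqrt{\sigma(\tilde{H}\tilde{H}^{*})}\subset\pm[\sqrt{C_{2,\alpha,t}},\sqrt{C_{1,\alpha,t}}]$, a set bounded away from $0$. Since deleting the point $0$ commutes with the union over $\theta$, the bulk $\Sigma_{\alpha,t}=(\bigcup_{\theta}\sigma(H_{\alpha,t,\theta}))\setminus\{0\}$ equals $\pm\sqrt{\bigcup_{\theta}\sigma(\tilde{H}\tilde{H}^{*})}$ and therefore lies in this same annular set, whence $\mathrm{dist}(\Sigma_{\alpha,t},0)\geq\sqrt{C_{2,\alpha,t}}>0$.

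I do not expect a genuine obstacle; the statement is an assembly of the preceding lemmas. The only step requiring care is the inversion of the squaring map: spectral mapping does not by itself place both square roots in the spectrum, and it is precisely the symmetry of Theorem \ref{thm:H=-H} that supplies the missing sign. One should also track the extra $\{0\}$ produced by $\sigma(\tilde{H}^{*}\tilde{H})$ through the square root---it contributes exactly the point $0\in\sigma(H_{\alpha,t,\theta})$ already identified in Theorem \ref{thm:zero}, and it remains disjoint from the bulk by Proposition \ref{prop:away_from_zero}.
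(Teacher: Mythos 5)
Your proposal is correct and takes essentially the same route as the paper, which likewise obtains \eqref{eq:Sigma=sqrt} by combining the block-diagonalization \eqref{eq:H2=diagonal} of $H_{\alpha,t,\theta}^2$ with Lemma \ref{lem:Lieb_to_AMO}, Lemma \ref{lem:sigH*H=sigHH*_0}, and the symmetry $\sigma(H_{\alpha,t,\theta})=-\sigma(H_{\alpha,t,\theta})$ of Theorem \ref{thm:H=-H}, and deduces $\mathrm{dist}(\Sigma_{\alpha,t},0)>0$ from the same uniform bound you cite via Proposition \ref{prop:away_from_zero}. Your explicit remark that the spectral mapping theorem alone only places one of $\pm\sqrt{\mu}$ in the spectrum, with Theorem \ref{thm:H=-H} supplying the missing sign, is precisely the role the symmetry plays in the paper's terser assembly of these ingredients.
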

The later claim follows from combining \eqref{eq:Sigma=sqrt} with Lemma \ref{lem:sigH*H=sigHH*_0}.

\subsection{Proofs of Theorems \ref{thm:away_from_zero} and \ref{thm:Cantor_2d}}\ 

The Cantor set assertions in Theorems \ref{thm:away_from_zero} and \ref{thm:Cantor_2d} follow from combining Theorem \ref{thm:sig_H} with Theorem \ref{thm:AMO_Cantor}.
The claims on the Hausdorff dimensions follow from combining Theorem \ref{thm:sig_H} with \ref{thm:AMO_Hausdorff}.

To show purely continuous spectrum of $H_{\alpha,t}$, we need to show the two dimensional Hamiltonian $H_{\alpha,t}$ can not have an eigenvalue $E\neq 0$.
Assume to the contrary that $H_{\alpha,t}u=Eu$, $E\neq 0$, has a solution $u=(u_{m,n}^{A,B,C})\in \ell^2(\Z^2,\C^3)$ with $\|u\|_{\ell^2(\Z^2,\C^3)}=1$.
Taking Fourier transform as in the derivation of \eqref{eq:sigH^2=sigH^1}, we have for a.e. $\theta\in \T$, 
\begin{align}\label{eq:Hutheta=Eutheta}
    H_{\alpha,t,\theta}u_{\theta}=Eu_{\theta},
\end{align}
where $u_{\theta}=(u_{\theta,m}^{A,B,C})\in \ell^2(\Z,\C^3)$, defined by $u_{\theta,m}^{A,B,C}=\sum_{n\in \Z}u_{m,n}^{A,B,C}e^{2\pi in\theta}$, satisfies $$\int_{\T}\|u_{\theta}\|_{\ell^2(\Z,\C^3)}^2\, \mathrm{d}\theta=1.$$
Note that this implies there is a positive measure set $\mathcal{A}$ of $\theta$ such that $\|u_{\theta}\|_{\ell^2(\Z,\C^3)}>0$.

By \eqref{eq:Hutheta=Eutheta}, for a.e. $\theta\in \T$, we have
\begin{align}
    u_{\theta,m}^B=&E^{-1}\overline{\mathcal{K}_m(\theta+m\alpha)}u_{\theta,m}^A,\\
    u_{\theta,m}^C=&E^{-1}t(u_{\theta,m}^A+u_{\theta,m+1}^A),
\end{align}
and by replacing $u_{\theta}^B, u_{\theta}^C$ with $u_{\theta}^A$, we have
\begin{align}\label{eq:H_AMO_utheta=Eutheta}
    H^{\mathrm{AMO}}_{\alpha,\theta,t^{-2}}u_{\theta}^A=(E^2t^{-2}-2t^{-2}-2)u_{\theta}^A,
\end{align}
with $\|u_{\theta}^A\|_{\ell^2(\Z)}\in (0,1]$ for $\theta\in \mathcal{A}$. 
This implies $H^{\mathrm{AMO}}_{\alpha,\theta,t^{-2}}$ has a {\it fixed} eigenvalue $E^2t^{-2}-2t^{-2}-2$ for $\theta\in \mathcal{A}$, a positive measure set. This contradicts a well-known result, e.g. the continuity of the integrated density of states of the almost Mathieu operator \cite[Theroem 2.9]{AS}.

Finally, to show $H_{\alpha,t=1}$ has purely singular continuous spectrum, it suffices to exclude absolutely continuous spectrum. By Theorem \ref{thm:critical_AMO_mes=1}, $\mathrm{mes}(\sigma(H^{\mathrm{AMO}}_{\alpha,\theta,1}))=0$. This together with Theorem \ref{thm:sig_H} implies $\mathrm{mes}(\Sigma_{\alpha,t})=0$. 
Hence the absolutely continuous spectrum of $H_{\alpha,t=1}$ is empty. \qed

\subsection{Proof of Theorem \ref{thm:spectral_dec}}\ 

For any $t\neq 0$, any $\alpha\notin \Z$ and any $\theta\in \T$, $E=0$ is a point spectrum since it is an isolated point in the spectrum.

To address the spectral decomposition of $H_{\alpha,t,\theta}$ on $\Sigma_{\alpha,t}$, it is enough to study $H^2_{\alpha,t,\theta}$, which through \eqref{eq:H2=diagonal} is reduced to the study of $\tilde{H}_{\alpha,t,\theta}\tilde{H}_{\alpha,t,\theta}^*$ and $\tilde{H}^*_{\alpha,t,\theta}\tilde{H}_{\alpha,t,\theta}$.
By \eqref{eq:HH*=HAMO}, the spectral decomposition of $\tilde{H}_{\alpha,t,\theta}\tilde{H}_{\alpha,t,\theta}^*$ is the same as that of the almost Mathieu operator $H_{\alpha,\theta,t^{-2}}^{\mathrm{AMO}}$.
By \eqref{eq:uni_equiv_no_ker} and Lemma \ref{lem:sigH*H=sigHH*_0}, the spectral decomposition of $\tilde{H}^*_{\alpha,t,\theta}\tilde{H}_{\alpha,t,\theta}$, restricted to $[C_{2,\alpha,t},C_{1,\alpha,t}]\subset (0,\infty)$, is the same as that of $\tilde{H}_{\alpha,t,\theta}\tilde{H}_{\alpha,t,\theta}^*$.
Since by Theorem \ref{thm:AMO_ac}, when $t>1$, $H^{\mathrm{AMO}}_{\alpha,\theta,t^{-2}}$ has purely absolutely continuous spectrum, thus $H_{\alpha,t,\theta}$ also has absolutely continuous spectrum on the bulk spectrum $\Sigma_{\alpha,t}$ for any irrational $\alpha$ and any $\theta\in \T$.
The singular continuous/pure point claims on $\Sigma_{\alpha,t}$ follow similarly when incorporating the singular continuous/pure point results of AMO in Theorems \ref{thm:AMO_loc}, \ref{thm:AMO_loc2} and \ref{thm:AMO_critical}.
Finally, by Shnol's theorem \cite{Be,Simon,Han}, to show $H_{\alpha,t,\theta}$ exhibits Anderson localization in the claimed regimes, it suffices to show any generalized (polynomially bounded) eigenfunction decays exponentially. We assume $u$ is a generalized eigenfunction $H_{\alpha,t,\theta}u=Eu$ for $E\neq 0$.
Direct computations as in \eqref{eq:H_AMO_utheta=Eutheta} show that $u^A$ is a generalized eigenfunction satisfying:
\begin{align}
    H^{\mathrm{AMO}}_{\alpha,\theta,t^{-2}}u^A=(E^2t^{-2}-2t^{-2}-2)u^A.
\end{align}
The exponential decay of $u^A_m$, in $m$, follows from the Anderson localization of $H^{\mathrm{AMO}}_{\alpha,\theta,t^{-2}}$ as in Theorems \ref{thm:AMO_loc} and \ref{thm:AMO_loc2}.
The claimed decay of $u^B$, $u^C$ follows from that of $u^A$ via the following:
    \begin{align}\label{eq:uBC=uA}
        u_m^B=E^{-1} \overline{\mathcal{K}_m(\theta+m\alpha)} u_m^A, \text{ and } u_m^C=E^{-1}t(u_m^A+u_{m+1}^A).
    \end{align}
This proves Anderson localization. \qed

\subsection{An alternate reduction from $H_{\alpha,t,\theta}$ to $H^{\mathrm{AMO}}_{\alpha,\theta,t^{-2}}$}\label{sec:Weyl}
\

First, let $0\neq E\in \sigma(H_{\alpha,t,\theta})$.
Then by the Weyl's criterion, for any small $\varepsilon>0$, there exists a normalized vectors $u$, $\|u\|_{\ell^2(\Z,\C^3)}=1$, such that
\begin{align}\label{eq:L_W1}
    \|(H_{\alpha,t,\theta}-E)u\|_{\ell^2(\Z,\C^3)}\leq \varepsilon.
\end{align}
For $D=A,B,C$, denote 
\begin{align}
    r^{D}_m:=(H_{\alpha,t,\theta}u)^{D}_m-Eu^{D}_m,
\end{align}
and $r=(...,r_m^{A},r_m^{B},r_m^{C},...)^T$.
Thus \eqref{eq:L_W1} implies
\begin{align}\label{eq:L_W2}
    \|r\|_{\ell^2(\Z,\C^3)}\leq \varepsilon.
\end{align}
We have
\begin{align}
Eu^A_m+r^A_m&=\mathcal{K}_m(\theta+m\alpha)u^B_{m}+t(u^C_m+u^C_{m-1}), \label{eq:uA=uB+uC+r}\\
u^B_m&=E^{-1}\overline{\mathcal{K}_m(\theta+m\alpha)} u^A_m-E^{-1}r^B_m \label{eq:uB=uA+r}\\
u^C_m&=E^{-1}t(u^A_m+u^A_{m+1})-E^{-1}r^C_m. \label{eq:uC=uA+r}
\end{align}
Plugging \eqref{eq:uB=uA+r} and \eqref{eq:uC=uA+r} into the \eqref{eq:uA=uB+uC+r}, we have
\begin{align}
    Eu_m^A+r_m^A=E^{-1}|\mathcal{K}_m(\theta+m\alpha)|^2u_m^A-E^{-1}\mathcal{K}_m(\theta+m\alpha)r_m^B+&E^{-1}t^2(u_{m+1}^A+2u_m^A+u_{m-1}^A)\\
    &-E^{-1}t(r_m^C+r_{m-1}^C),
\end{align}
which reduces to 
\begin{align}
    (H^{\mathrm{AMO}}_{\alpha,\theta,t^{-2}}u)_m^A-g_t(E)u_m^A=Et^{-2}r_m^A+t^{-2}\mathcal{K}_m(\theta+m\alpha)r_m^B+t^{-1}(r_m^C+r_{m-1}^C),
\end{align}
in which
$$g_t(E):=E^2t^{-2}-2t^{-2}-2.$$
Hence
\begin{align}\label{eq:L_W3}
    &\|(H^{\mathrm{AMO}}_{\alpha,\theta,t^{-2}}-g_t(E))u^A\|_{\ell^2(\Z)}\\
    \leq &|E|t^{-2} \|r^A\|_{\ell^2(\Z)}+t^{-2} \sup_{\theta}|\mathcal{K}_m(\theta)|\cdot \|r^B\|_{\ell^2(\Z)}+2t^{-2} \|r^C\|_{\ell^2(\Z)}\leq C_{E,t}\, \varepsilon,
\end{align}
where $C_{E,t}>0$ is a constant depending only on $E,t$.
It remains to obtain a lower bound for $\|u^A\|_{\ell^2(\Z)}$.
By \eqref{eq:uB=uA+r} and \eqref{eq:uC=uA+r}, we have
\begin{align}
    \|u^B\|_{\ell^2(\Z)}\leq C_{E,t} \|u^A\|_{\ell^2(\Z)}+E^{-1} \|r^B\|_{\ell^2(\Z)},\\
    \|u^C\|_{\ell^2(\Z)}\leq C_{E,t} \|u^A\|_{\ell^2(\Z)}+E^{-1} \|r^C\|_{\ell^2(\Z)}.
\end{align}
Hence combining the above with that $\|u\|_{\ell^2(\Z,\C^3)}=1$, we have
\begin{align}
    1\leq \|u^A\|_{\ell^2(\Z)}+\|u^B\|_{\ell^2(\Z)}+\|u^C\|_{\ell^2(\Z)}\leq C_{E,t} \|u^A\|_{\ell^2(\Z)}+2E^{-1}\varepsilon.
\end{align}
Hence 
\begin{align}\label{eq:L_W4}
\|u^A\|_{\ell^2(\Z)}\geq c_{E,t}>0.
\end{align}
Combining \eqref{eq:L_W3} with \eqref{eq:L_W4}, we have
\begin{align}
    \frac{\|(H^{\mathrm{AMO}}_{\alpha,\theta,t^{-2}}-g_t(E))u^A\|_{\ell^2(\Z)}}{\|u^A\|_{\ell^2(\Z)}}\leq C_{E,t}\, \varepsilon.
\end{align}
Letting $\varepsilon\to 0$, by the Weyl criterion, $g_t(E)\in \sigma(H^{\mathrm{AMO}}_{\alpha,\theta,t^{-2}})$.

Next, let $g_t(E)\in \sigma(H^{\mathrm{AMO}}_{\alpha,\theta,t^{-2}})$ for some $E\neq 0$. 
By the Weyl's criterion, for any small $\varepsilon>0$, there exists a normalized vector $v$, $\|v\|_{\ell^2(\Z)}=1$, such that 
\begin{align}
    \|(H^{\mathrm{AMO}}_{\alpha,\theta,t^{-2}}-g_t(E))v\|\leq \varepsilon.
\end{align}
Let $\widetilde{r}_m:=(H^{\mathrm{AMO}}_{\alpha,\theta,t^{-2}}v)_m-g_t(E)v_m$, hence $\|r\|_{\ell^2(\Z)}\leq \varepsilon$.
For $m\in \Z$, define
\begin{align}
    u_m^A&=v_m\\
    u_m^B&=E^{1}\overline{\mathcal{K}_m(\theta+m\alpha)}v_m=E^{-1}\overline{\mathcal{K}_m(\theta+m\alpha)}u_m^A\label{eq:uB=uA'}\\
    u_m^C&=E^{-1}t(v_m+v_{m+1})=E^{-1}(u_m^A+u_{m+1}^A).\label{eq:uC=uA'}
\end{align}
Plugging the above into 
\begin{align}
    \widetilde{r}_m&=v_{m+1}+v_{m-1}+2t^{-2}\cos(2\pi (\theta+m\alpha))v_m-(E^2t^{-2}-2t^{-2}-2)v_m,\\
    &=v_{m+1}+v_{m-1}+t^{-2}|\mathcal{K}_m(\theta+m\alpha)|^2 v_m-E^2t^{-2}v_m+2v_m\\
    &=Et^{-1}(u_m^C+u_{m+1}^C)+Et^{-2}\mathcal{K}_m(\theta+m\alpha)u_m^B-E^2t^{-2}u_m^A,
\end{align}
we have
\begin{align}\label{eq:uA=uB+uC'}
    Eu_m^A=\mathcal{K}_m(\theta+m\alpha)u_m^B+t(u_m^C+u_{m+1}^C)-E^{-1}t^2 \widetilde{r}_m.
\end{align}
Combining \eqref{eq:uB=uA'}, \eqref{eq:uC=uA'} with \eqref{eq:uA=uB+uC'}, we have 
\begin{align}
    \|(H_{\alpha,t,\theta}-E)u\|_{\ell^2(\Z,\C^3)}\leq E^{-1}t^2 \|\widetilde{r}\|_{\ell^2(\Z)}\leq E^{-1}t^2 \varepsilon.
\end{align}
Since $\|u\|_{\ell^2(\Z,\C^3)}\geq \|u^A\|_{\ell^2(\Z)}=\|v\|_{\ell^2(\Z)}=1$, we have
\begin{align}
    \frac{\|(H_{\alpha,t,\theta}-E)u\|_{\ell^2(\Z,\C^3)}}{\|u\|_{\ell^2(\Z,\C^3)}}\leq E^{-1}t^2 \varepsilon.
\end{align}
Letting $\varepsilon\to 0$, by the Weyl's criterion, we have $E\in \sigma(H_{\alpha,t,\theta})$.
This completes the proof. \qed

\section{Lieb lattice with general coupling}\label{sec:general}

In this last section, we briefly discuss the Lieb lattice with  more general hopping integrals, see Fig.\ref{DLieb}. 
Let ${\bf t}=(t_1,t_2,t_3, t_4)\in \R_+^4$, without loss of generality we can set $t_1=1$. The 2D tight binding model on the Lieb lattice under magnetic field can be expressed as below, again, only nearest neighborhood interactions are taken into account.
\begin{align}\label{LiebHam_2}
    (H_{\alpha,{\bf t}}^g u)^A_{n,m} &=e^{-i\pi m \alpha}u^B_{{n,m}}+t_4 e^{i\pi m \alpha} u^B_{{n-1,m}}+t_2 u^C_{{n,m}}+t_3 u^C_{{n,m-1}}\\
    (H_{\alpha,{\bf t}}^g u)^B_{{n,m}}&=t_4 e^{-i\pi m \alpha}u^A_{{n+1,m}}+ e^{i\pi m \alpha}u^A_{{n,m}}\\
    (H_{\alpha,{\bf t}}^g u)^C_{{n,m}}&=t_2 u^A_{{n,m}}+t_3 u^A_{{n,m+1}},
\end{align}
where $2\pi \alpha$ denotes the flux through a unit cell of the Lieb lattice. 

This general hopping model and its special cases 
have been studied both theoretically and experimentally in \cite{BGMSJ, WF, PSBKM, JPVKT, BES, PM, RL}.

\begin{figure} 
\begin{tikzpicture}

    \tikzset{Aset/.style={black,circle,draw,fill=gray,scale=1.6,inner sep=2pt}};
    \tikzset{Bset/.style={black,circle,draw,fill=blue,scale=1.4,inner sep=2pt}};
    \tikzset{Cset/.style={black,circle,draw,fill=red,scale=1.4,inner sep=2pt}};

    \coordinate (origin_of_array) at (0,0);
    \coordinate (bottom_left) at (0,0);
    \coordinate (top_right) at (6,6);
    \node[Aset, label=left:$A$] (a1) at (1,1) {};
    \node[Aset, label=left:$A$] (a2) at (1,3) {};
    \node[Aset, label=left:$A$] (a3) at (1,5) {};
    \node[Aset, label=below:$A$] (a4) at (3,1) {};
    \node[Aset, ] (a5) at (3,3) {};
    \node[Aset, ] (a6) at (3,5) {};
    \node[Aset, label=below:$A$] (a7) at (5,1) {};
    \node[Aset, ] (a8) at (5,3) {};
    \node[Aset, ] (a9) at (5,5) {};
    \node[Aset, label=below:$A$] (a10) at (5,1) {};
    \node[Cset, label=below:$C$] (c1) at (2,1) {};
    \node[Cset, ] (c2) at (2,3) {};
    \node[Cset, ] (c3) at (2,5) {};
    \node[Cset, label=below:$C$] (c5) at (4,1) {};
    \node[Cset, ] (c6) at (4,3) {};
    \node[Cset, ] (c7) at (4,5) {};
    \node[Bset, label=left:$B$] (b1) at (1,2) {};
    \node[Bset, label=left:$B$] (b2) at (1,4) {};
    \node[Bset, ] (b4) at (3,2) {};
    \node[Bset, ] (b5) at (3,4) {};
    \node[Bset, ] (b7) at (5,2) {};
    \node[Bset, ] (b8) at (5,4) {};
    \node[label=left:$na$] at (0,1) {};
    \node[label=left:$(n+1)a$] at (0,3) {};
    \node[label=left:$(n+2)a$] at (0,5) {};
    \node[label=below:$ma$] at (1,-0.13) {};
    \node[label=below:$(m+1)a$] at (3,0) {};
    \node[label=below:$(m+2)a$] at (5,0) {};
    \draw[thick] (a1)--(a3);
    \draw[thick] (a3)--(a9);
    \draw[thick] (a9)--(a7);
    \draw[thick] (a7)--(a1);
    \draw[thick] (a2)--(a8);
    \draw[thick] (a4)--(a6);
    \draw[thick,->] (0,0)--(0,6){};
    \draw[thick, ->] (0,0)--(6,0) {};
    
    \path[<->] (a1) edge[bend right = 60] node[xshift = 4.5 pt]{${\scriptscriptstyle t_1}$} (b1); 
    \path[<->] (a1) edge[bend right = 60] node[yshift = -4pt] {${\scriptscriptstyle t_2}$} (c1);
     \path[<->] (c1) edge[bend right = 60] node[yshift = -4pt] {${\scriptscriptstyle t_3}$} (a4);
     \path[<->] (b1) edge[bend right = 60] node[xshift = 4.5pt] {${\scriptscriptstyle t_4}$} (a2);
\end{tikzpicture}
\caption{Lieb Lattice}
\label{DLieb}
\end{figure}
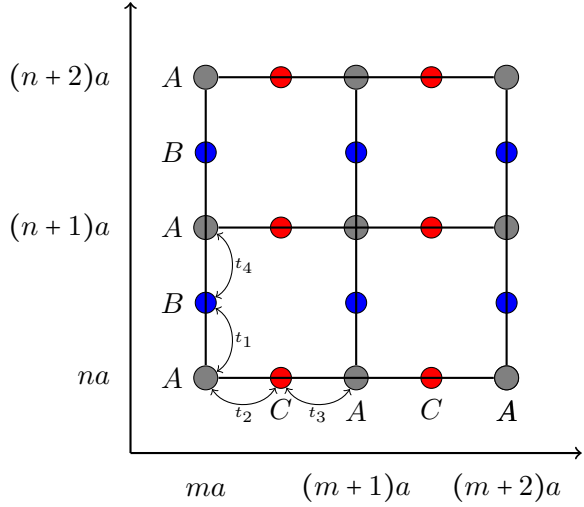

Similar to the conventional model in Sec. \ref{sec:model}, the two dimensional Hamiltonian can also be reduced to the following one dimensional operator after Fourier transform: $H_{\alpha,{\bf t},\theta}^g$  acting on $\ell^2(\Z; \C^3)$ as follows:
\begin{align}\label{def:H_alpha_theta_2}
    (H_{\alpha,{\bf t},\theta}^g u)^A_{m}&=e^{-\pi i m\alpha}(1+t_4 e^{2\pi i(\theta+m\alpha)})u^B_{m}+t_2 u^C_m+t_3 u^C_{m-1},\\
    (H_{\alpha,{\bf t},\theta}^g u)^B_m&=e^{\pi i m\alpha}(1+t_4 e^{-2\pi i(\theta+m\alpha)})u^A_m\\
    (H_{\alpha,{\bf t},\theta}^g u)^C_m&=t_2 u^A_m+t_3 u^A_{m+1}.
\end{align}
Similar to the connection to almost Mathieu operator for previous model, the related scalar model for the general hopping model in this section is:
\begin{align}
    u_{m+1}^A+u_{m-1}^A+(t_2 t_3)^{-1}(1+t_2^2+t_3^2+t_4^2+2t_4 \cos(2\pi(\theta+m\alpha))-E^2)u_m^A=0.
\end{align}
We state the theorems for the general hopping case and omit the proofs since they are identical to those of Theorems \ref{thm:Cantor_2d} and \ref{thm:spectral_dec}.
Let $\Sigma_{\alpha,{\bf t}}^g:=\sigma(H^g_{\alpha,{\bf t}})\setminus \{0\}$ be the {\it bulk of spectrum}.
\begin{theorem}\label{thm:Cantor_2d_general}
For $\alpha\notin \Z$, $E=0$ is a point spectrum of $H^g_{\alpha,{\bf t}}$, and $\mathrm{dist}(\Sigma^g_{\alpha,{\bf t}},\{0\})>0$. $H^g_{\alpha,{\bf t}}$ has purely continuous spectrum in $\Sigma^g_{\alpha,{\bf t}}$.
Furthermore, if $\alpha$ is irrational,
    \begin{itemize}
        \item $\Sigma^g_{\alpha,{\bf t}}$ is a Cantor set.
        \item for $t_4=t_2t_3$, $H^g_{\alpha,{\bf t}}$ has purely singular continuous spectrum in $\Sigma^g_{\alpha,{\bf t}}$, and the Hausdorff dimension of $\Sigma^g_{\alpha,{\bf t}}$ is at most $1/2$.
    \end{itemize}  
\end{theorem}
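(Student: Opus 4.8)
The plan is to transcribe the reduction of Section~\ref{sec:reduction} with the auxiliary almost Mathieu operator now carrying coupling $\lambda:=t_4/(t_2t_3)$, so that the \emph{critical} coupling $\lambda=1$ corresponds exactly to the hypothesis $t_4=t_2t_3$. Exploiting the bipartite structure ($A$–sites couple only to $B,C$–sites), I would first write
\[
H^g_{\alpha,{\bf t},\theta}=\begin{pmatrix} 0 & \tilde H^g\\ (\tilde H^g)^* & 0\end{pmatrix},
\]
where $\tilde H^g:\ell^2(\Z,\C^2)\to\ell^2(\Z,\C)$ carries the $(B,C)$–components to the $A$–component as in \eqref{def:H_alpha_theta_2}. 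Squaring as in \eqref{eq:H2=diagonal} block–diagonalizes $(H^g_{\alpha,{\bf t},\theta})^2$, and a direct computation—using $|1+t_4e^{2\pi i(\theta+m\alpha)}|^2=1+t_4^2+2t_4\cos(2\pi(\theta+m\alpha))$ and that the $C$–hoppings contribute $t_2^2+t_3^2$ on the diagonal and $t_2t_3$ off–diagonal—yields the analog of \eqref{eq:HH*=HAMO} and of Lemma~\ref{lem:Lieb_to_AMO}:
\[
\tilde H^g(\tilde H^g)^*=t_2t_3\,H^{\mathrm{AMO}}_{\alpha,\theta,\lambda}+(1+t_2^2+t_3^2+t_4^2)\,\mathrm{Id}.
\]
Together with the unchanged sign–flip unitary of Theorem~\ref{thm:H=-H}, this gives the spectral picture of Theorem~\ref{thm:sig_H}, namely $\sigma(H^g_{\alpha,{\bf t},\theta})=\pm\sqrt{\,t_2t_3\,\sigma(H^{\mathrm{AMO}}_{\alpha,\theta,\lambda})+(1+t_2^2+t_3^2+t_4^2)\,}\cup\{0\}$.

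I would next dispose of $E=0$ and of the spectral gap. One checks $\ker(\tilde H^g)^*=\{0\}$, since $((\tilde H^g)^*u)^C_m=t_2u_m+t_3u_{m+1}=0$ forces a geometric, hence non-$\ell^2$, sequence; conversely $\tilde H^g$ has nontrivial kernel, because the single equation defining $\ker\tilde H^g$ can be solved for $u^B$ from a suitable compactly supported $u^C$ (the coefficient $e^{-\pi im\alpha}(1+t_4e^{2\pi i(\theta+m\alpha)})$ vanishes for at most one $m$ when $\alpha$ is irrational), so $0\in\sigma(H^g_{\alpha,{\bf t},\theta})$ for every $\theta$. The delicate step is the bulk–gap estimate. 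Exactly as in Theorem~\ref{thm:AMO_gap_boundary}, $\sigma(H^{\mathrm{AMO}}_{\alpha,\theta,\lambda})\subseteq[-2-2\lambda+c_\alpha,\,2+2\lambda-c_\alpha]$ with $c_\alpha>0$ forced by $\alpha\notin\Z$; substituting into the identity above and completing the square gives
\[
\min_{\theta}\sigma\big(\tilde H^g(\tilde H^g)^*\big)\ \ge\ (t_2-t_3)^2+(1-t_4)^2+t_2t_3\,c_\alpha\ >\ 0,
\]
which is the analog of Proposition~\ref{prop:away_from_zero}. Hence $\Sigma^g_{\alpha,{\bf t}}$ is bounded away from $0$; since $0\in\sigma(H^g_{\alpha,{\bf t}})=\bigcup_\theta\sigma(H^g_{\alpha,{\bf t},\theta})$ is then an \emph{isolated} point of the spectrum of the self-adjoint operator $H^g_{\alpha,{\bf t}}$, it is an eigenvalue, giving the pure–point claim at $E=0$.

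The remaining assertions transfer verbatim. Cantor-ness of $\Sigma^g_{\alpha,{\bf t}}$ follows from the Ten Martini Theorem~\ref{thm:AMO_Cantor} at coupling $\lambda$, since the map $x\mapsto\pm\sqrt{t_2t_3\,x+(1+t_2^2+t_3^2+t_4^2)}$ is bi-Lipschitz (indeed smooth) on the relevant interval, whose image is bounded away from the branch point by the square–completion above, and hence preserves both Cantor-ness and Hausdorff dimension. Purely continuous spectrum on $\Sigma^g_{\alpha,{\bf t}}$ is obtained as in Section~\ref{sec:proof}: an $\ell^2$ eigenfunction of $H^g_{\alpha,{\bf t}}$ at some $E\neq0$ would, after Fourier transform in $n$ and elimination of $u^B_\theta,u^C_\theta$ in favor of $u^A_\theta$, force $H^{\mathrm{AMO}}_{\alpha,\theta,\lambda}$ to possess the \emph{fixed} eigenvalue $(t_2t_3)^{-1}(E^2-1-t_2^2-t_3^2-t_4^2)$ on a positive–measure set of $\theta$, contradicting the continuity of the integrated density of states \cite{AS}. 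Finally, in the critical case $t_4=t_2t_3$ (so $\lambda=1$), Theorem~\ref{thm:critical_AMO_mes=1} gives $\mathrm{mes}(\sigma(H^{\mathrm{AMO}}_{\alpha,\theta,1}))=0$, whence $\mathrm{mes}(\Sigma^g_{\alpha,{\bf t}})=0$, the absolutely continuous part is empty, and the spectrum on the bulk is purely singular continuous; the bound $\dim_H\Sigma^g_{\alpha,{\bf t}}\le1/2$ follows from Theorem~\ref{thm:AMO_Hausdorff} through the same locally bi-Lipschitz reduction. The only genuine obstacle is the bulk–gap estimate, where one must verify that the geometric degeneration $t_2=t_3,\ t_4=1$ is still rescued by the strict positivity of $c_\alpha$; every other step is the $t_4=1$ argument of Section~\ref{sec:proof} with the constants relabeled.
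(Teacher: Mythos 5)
Your proposal is correct, and it is essentially the proof the paper intends: the paper explicitly omits the argument for Theorem \ref{thm:Cantor_2d_general}, declaring it ``identical'' to those of Theorems \ref{thm:Cantor_2d} and \ref{thm:spectral_dec}. Your squared-Hamiltonian identity $\tilde H^g(\tilde H^g)^*=t_2t_3\,H^{\mathrm{AMO}}_{\alpha,\theta,\lambda}+(1+t_2^2+t_3^2+t_4^2)\,\mathrm{Id}$ with $\lambda=t_4/(t_2t_3)$ is exactly the generalization of \eqref{eq:HH*=HAMO} and Lemma \ref{lem:Lieb_to_AMO} encoded in the paper's displayed scalar model in Sec.~\ref{sec:general}; your gap constant $(t_2-t_3)^2+(1-t_4)^2+t_2t_3c_\alpha$ is the correct analog of Proposition \ref{prop:away_from_zero} (it reduces to $t^2c_\alpha$ when $t_2=t_3=t$, $t_4=1$, and you rightly flag that in the degenerate case $t_2=t_3$, $t_4=1$ everything hangs on $c_\alpha>0$, which the paper's proof of Theorem \ref{thm:AMO_gap_boundary} supplies uniformly in $\theta$ for any $\alpha\notin\Z$); and the transfers of Cantorness, the Hausdorff bound, purely continuous spectrum via IDS continuity \cite{AS}, and measure-zero spectrum at $t_4=t_2t_3$ via Theorems \ref{thm:AMO_Cantor}, \ref{thm:AMO_Hausdorff}, \ref{thm:critical_AMO_mes=1} all proceed exactly as in Sec.~\ref{sec:proof}, through the locally bi-Lipschitz map $x\mapsto\pm\sqrt{t_2t_3x+1+t_2^2+t_3^2+t_4^2}$ as in Theorem \ref{thm:sig_H}.

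The one genuinely different step is $E=0$, and there your route is not just different but necessary. The paper's Theorem \ref{thm:zero} builds a Weyl sequence from single $B$-site vectors, exploiting that $|1+e^{2\pi i(\theta+m\alpha)}|$ gets small along the orbit; for $t_4\neq1$ this degenerates, since $|1+t_4e^{2\pi i\phi}|\geq|1-t_4|>0$ for all $\phi$, so a verbatim transcription of that proof would fail. Your flat-band construction---a compactly supported element of $\ker\tilde H^g$ obtained by solving the single $A$-row equation for $u^B$ given a compactly supported $u^C$, together with $\ker(\tilde H^g)^*=\{0\}$ via the geometric-sequence observation---repairs this and in fact upgrades $0$ to an exact eigenvalue of every fiber $H^g_{\alpha,{\bf t},\theta}$, which is cleaner than the paper's isolation argument. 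One small loose end: your parenthetical ``vanishes for at most one $m$ when $\alpha$ is irrational'' does not cover $\alpha\in\Q\setminus\Z$ with $t_4=1$ (the theorem's first sentence asserts $E=0$ is point spectrum for \emph{all} $\alpha\notin\Z$), where the coefficient $1+t_4e^{2\pi i(\theta+m\alpha)}$ can vanish along a full arithmetic progression $m\in m_0+q\Z$. The same trick closes it: place the support of $u^C$ strictly between consecutive zeros, or for small $q$ choose a compactly supported $u^C$ solving $t_2u^C_m+t_3u^C_{m-1}=0$ at the zero sites (e.g.\ $u^C_{m_0-1}=1$, $u^C_{m_0}=-t_3/t_2$). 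With that one-line addendum the proposal is complete.
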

\begin{theorem}\label{thm:L_t<1}
For any $(t_2,t_3,t_4)\in \R_+^3$, any $\alpha\notin \Z$ and any $\theta$, $E=0$ is a point spectrum of $H^g_{\alpha,{\bf t},\theta}$, and  $\mathrm{dist}(\Sigma^g_{\alpha,{\bf t}}, \{0\})>0$.
\begin{itemize}
    \item For $0<t_2t_3<t_4$,
\begin{enumerate}
    \item for irrational $\alpha$ such that $0\leq \beta(\alpha)<\log(t_4/(t_2t_3))$ and $\theta$ such that $\gamma(\alpha,\theta)=0$, $H^g_{\alpha,{\bf t},\theta}$ exhibits Anderson localization in $\Sigma^g_{\alpha,{\bf t}}$.    
    \item for irrational $\alpha$ such that $\beta(\alpha)>\log (t_4/(t_2t_3))$ and all $\theta\in \T$, $H^g_{\alpha,{\bf t},\theta}$ has purely singular continuous spectrum in $\Sigma^g_{\alpha,{\bf t}}$.
    \item for irrational  $\alpha$ such that $\beta(\alpha)=0$ and $\theta$ such that $\gamma(\alpha,\theta)<\log (t_4/(t_2t_3))$, $H^g_{\alpha,{\bf t},\theta}$ exhibits Anderson localization in $\Sigma^g_{\alpha,{\bf t}}$.
    \item for irrational $\alpha$ such that $\beta(\alpha)=0$ and $\theta$ such that $\gamma(\alpha,\theta)>\log (t_4/(t_2t_3))$, $H^g_{\alpha,{\bf t},\theta}$ has purely singular continuous spectrum in $\Sigma^g_{\alpha,{\bf t}}$.
\end{enumerate}
\item For $t_2t_3>t_4>0$, for any irrational $\alpha$ and any $\theta\in\T$, $H^g_{\alpha,{\bf t},\theta}$ has absolutely continuous spectrum in $\Sigma^g_{\alpha,{\bf t}}$.
\item For $t_2t_3=t_4>0$, for any irrational $\alpha$ and any $\theta\in \T$, $H^g_{\alpha,{\bf t},\theta}$ has purely singular continuous spectrum in $\Sigma^g_{\alpha,{\bf t}}$.
\end{itemize}
\end{theorem}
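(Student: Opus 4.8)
The plan is to repeat, essentially verbatim, the reduction to the almost Mathieu operator from Section~\ref{sec:reduction}, while carrying the four hopping parameters through every constant. Since $A$-sites couple only to $B$- and $C$-sites, the operator $H^g_{\alpha,{\bf t},\theta}$ retains the bipartite off-diagonal block structure
\[
H^g_{\alpha,{\bf t},\theta}=\begin{pmatrix} 0 & \tilde{H}^g_{\alpha,{\bf t},\theta}\\ (\tilde{H}^g_{\alpha,{\bf t},\theta})^* & 0\end{pmatrix},
\]
where $\tilde{H}^g_{\alpha,{\bf t},\theta}\colon \ell^2(\Z,\C^2)\to\ell^2(\Z)$ is read off from \eqref{def:H_alpha_theta_2} as $(\tilde{H}^g_{\alpha,{\bf t},\theta}u)_m=e^{-\pi i m\alpha}(1+t_4 e^{2\pi i(\theta+m\alpha)})u_m^B+t_2 u_m^C+t_3 u_{m-1}^C$. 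First I would square the Hamiltonian exactly as in \eqref{eq:H2=diagonal}, so that $(H^g_{\alpha,{\bf t},\theta})^2$ is block diagonal with diagonal entries $\tilde{H}^g(\tilde{H}^g)^*$ and $(\tilde{H}^g)^*\tilde{H}^g$.

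The computational heart is the analogue of \eqref{eq:HH*=HAMO}. A direct computation, in which the $B$-channel contributes the potential $|1+t_4 e^{2\pi i(\theta+m\alpha)}|^2=1+t_4^2+2t_4\cos(2\pi(\theta+m\alpha))$ and the $C$-channel contributes $t_2 t_3(u_{m+1}+u_{m-1})+(t_2^2+t_3^2)u_m$, gives
\[
\tilde{H}^g_{\alpha,{\bf t},\theta}\,(\tilde{H}^g_{\alpha,{\bf t},\theta})^*=t_2 t_3\, H^{\mathrm{AMO}}_{\alpha,\theta,\lambda}+(1+t_2^2+t_3^2+t_4^2),\qquad \lambda:=\frac{t_4}{t_2 t_3}.
\]
This reproduces the scalar model displayed before the statement, with AMO coupling $\lambda$, which is precisely why the transition threshold is $\log(t_4/(t_2 t_3))=\log\lambda$. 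Next I would check $\ker(\tilde{H}^g_{\alpha,{\bf t},\theta})^*=\{0\}$: the $C$-component of $(\tilde{H}^g)^*u=0$ forces $t_2 u_m+t_3 u_{m+1}=0$, a two-sided geometric sequence that lies in $\ell^2(\Z)$ only if $u\equiv 0$; as in \eqref{eq:uni_equiv_no_ker} this yields the unitary equivalence of $\tilde{H}^g(\tilde{H}^g)^*$ with $(\tilde{H}^g)^*\tilde{H}^g$ restricted off the kernel. The away-from-zero bound (the analogue of Proposition~\ref{prop:away_from_zero}) then follows from the gap estimate of Theorem~\ref{thm:AMO_gap_boundary} applied with coupling $\lambda$: writing $\sigma(H^{\mathrm{AMO}}_{\alpha,\theta,\lambda})\subseteq[-2-2\lambda+c_{\alpha,{\bf t}},\,2+2\lambda-c_{\alpha,{\bf t}}]$ and inserting this into the displayed identity gives $\inf\sigma(\tilde{H}^g(\tilde{H}^g)^*)\ge (1-t_4)^2+(t_2-t_3)^2+t_2 t_3\, c_{\alpha,{\bf t}}>0$.

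With these ingredients the remainder proceeds exactly as in Sections~\ref{sec:reduction}--\ref{sec:proof}. The sign symmetry $H^g_{\alpha,{\bf t},\theta}\cong -H^g_{\alpha,{\bf t},\theta}$ holds via the same unitary that flips the sign on $A$-sites (Theorem~\ref{thm:H=-H}), so that $\sigma(H^g_{\alpha,{\bf t},\theta})=\pm\sqrt{\sigma(\tilde{H}^g(\tilde{H}^g)^*)}\cup\{0\}$ as in \eqref{eq:Sigma=sqrt}. For $E=0$: since the defining relation of $\ker\tilde{H}^g$ is a single scalar constraint per site on the pair $(u^B_m,u^C_m)$, the kernel is nontrivial, whence $0\in\sigma((\tilde{H}^g)^*\tilde{H}^g)$ as in Lemma~\ref{lem:sigH*H=sigHH*_0} and thus $E=0\in\sigma(H^g_{\alpha,{\bf t},\theta})$; being isolated from the bulk by the bound above, it is pure point (indeed for $t_4\neq 1$ one even reads off explicit finitely supported flat-band eigenvectors $(0,u^B,u^C)$). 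The Cantor and Hausdorff-dimension assertions on $\Sigma^g_{\alpha,{\bf t}}$ transfer from Theorems~\ref{thm:AMO_Cantor} and \ref{thm:AMO_Hausdorff}, and the spectral-type decomposition transfers, through \eqref{eq:uni_equiv_no_ker} and the squaring identity, from the AMO results with coupling $\lambda$: Theorem~\ref{thm:AMO_loc}, \cite{JL1}, Theorem~\ref{thm:AMO_loc2} in the supercritical regime $0<t_2t_3<t_4$ (i.e. $\lambda>1$), Theorem~\ref{thm:AMO_ac} in the subcritical regime $t_2t_3>t_4$ (i.e. $\lambda<1$), and Theorem~\ref{thm:AMO_critical} at criticality $t_2t_3=t_4$ (i.e. $\lambda=1$); the exponential decay of $u^B,u^C$ in the localization regimes is inherited from that of $u^A$ through the explicit relations analogous to \eqref{eq:uBC=uA}.

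There is no genuine obstacle beyond bookkeeping, since the architecture is identical to that of Theorems~\ref{thm:Cantor_2d} and \ref{thm:spectral_dec}. The one point requiring (easy) care is confirming that the away-from-zero constant stays strictly positive, which here reduces to the elementary identity $1+t_2^2+t_3^2+t_4^2-2t_2t_3-2t_4=(1-t_4)^2+(t_2-t_3)^2\ge 0$ together with the strict gap $t_2t_3\,c_{\alpha,{\bf t}}>0$; and noting that, unlike Theorem~\ref{thm:zero} (where $t_4=1$), the membership $0\in\sigma$ for $t_4\neq 1$ is cleanest via the nontriviality of $\ker\tilde{H}^g$ rather than a vanishing $B$-channel factor.
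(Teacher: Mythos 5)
Your proposal is correct, and it follows the same architecture the paper intends: the paper omits the proof of Theorem \ref{thm:L_t<1} outright, asserting it is identical to those of Theorems \ref{thm:Cantor_2d} and \ref{thm:spectral_dec}, and your bookkeeping confirms this. Your squaring computation $\tilde{H}^g(\tilde{H}^g)^*=t_2t_3\,H^{\mathrm{AMO}}_{\alpha,\theta,\lambda}+(1+t_2^2+t_3^2+t_4^2)$ with $\lambda=t_4/(t_2t_3)$ matches the scalar model displayed before the theorem statement; the kernel triviality of $(\tilde{H}^g)^*$ via the geometric sequence $u_{m+1}=-(t_2/t_3)u_m$ is right; and the gap constant $(1-t_4)^2+(t_2-t_3)^2+t_2t_3\,c_{\alpha,{\bf t}}>0$, resting on the identity $1+t_2^2+t_3^2+t_4^2-2t_2t_3-2t_4=(1-t_4)^2+(t_2-t_3)^2$, correctly quantifies the analogue of Proposition \ref{prop:away_from_zero} (Theorem \ref{thm:AMO_gap_boundary} applies verbatim at coupling $\lambda$). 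The one place where you genuinely deviate from the paper's template is the claim $0\in\sigma$, and your deviation is not merely cosmetic but necessary: the paper's route (Lemma \ref{lem:sigH*H=sigHH*_0} invoking ``the same argument as in the proof of Theorem \ref{thm:zero}'') uses a Weyl sequence concentrated on a $B$-site where $|1+e^{2\pi i(\theta+m\alpha)}|$ is small, and this fails for $t_4\neq 1$ since $|1+t_4e^{2\pi i\theta}|\geq|1-t_4|>0$ uniformly in $\theta$. Your replacement --- exhibiting finitely supported vectors in $\ker\tilde{H}^g$ (e.g., $u^C$ a delta at $m_0$, with $u^B$ at $m_0,m_0+1$ solving the single scalar constraint, possible precisely because the $B$-factor never vanishes when $t_4\neq1$) --- is the correct fix, and it even improves on the paper's logic by producing an exact eigenvector $(0,u^B,u^C)$ at $E=0$ directly, rather than deducing point spectrum from membership plus isolation. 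The remaining transfers (Cantor and Hausdorff dimension from Theorems \ref{thm:AMO_Cantor} and \ref{thm:AMO_Hausdorff}, spectral types from Theorems \ref{thm:AMO_loc}--\ref{thm:AMO_critical} through \eqref{eq:uni_equiv_no_ker} and Shnol's theorem, with decay of $u^B,u^C$ inherited from $u^A$ as in \eqref{eq:uBC=uA}) proceed exactly as in the paper.
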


\end{document}